\newcommand{\rmd}{\mathrm{d}}
\newcommand{\bbE}{\mathbb{E}}\newcommand{\rme}{\mathrm{e}}
\newcommand{\bbR}{\mathbb{R}}
\newcommand{\sign}{\mathsf{sign}}
\newcommand{\erf}{\mathrm{erf}}
\newcommand{\E}{\mathbb{E}}
\newtheorem{thm}{Theorem}
\newtheorem{prop}{Proposition}
\newtheorem{cor}{Corollary}
\newtheorem{lem}{Lemma}
\newtheorem{rem}{Remark}
\title{Functional uniqueness and stability of Gaussian priors in optimal $L^1$ estimation}
\author{Leighton P. Barnes and Alex Dytso
\thanks{
Leighton P. Barnes is with Carnegie Mellon University, Pittsburgh, PA 15213, USA (e-mail: leightonb@cmu.edu). Alex Dytso is with Qualcomm Flarion Technology, Inc.,
Bridgewater, NJ 08807, USA (e-mail: odytso2@gmail.com).
}
}
\begin{document}

\maketitle

\begin{abstract}

We study when optimal Bayesian estimators under Gaussian noise are approximately linear, and what this implies about the underlying prior distribution. Consider the classical model \(Y = X + Z\), where \(Z\) is Gaussian and independent of \(X\). It is well known that under squared-error loss, the conditional mean \(\mathbb{E}[X|Y]\) is a linear function of \(Y\) if and only if the prior is Gaussian. Much less is understood under absolute-error loss, where the optimal estimator is the conditional median and standard orthogonality-based tools no longer apply.

Recent work has established that, in the Gaussian noise model, the Gaussian prior is also the unique distribution that induces an exactly linear conditional median. In this paper, we move beyond exact characterizations and develop a quantitative stability theory: if the optimal estimator is approximately linear, must the prior be close to Gaussian?

For the \(L_2\) setting, we derive explicit rates showing that near-linearity of the conditional mean forces the prior to be close to Gaussian in the Levy metric. For the \(L_1\) setting, we develop a functional-analytic framework based on Hermite expansions and adjoint operators, establishing that approximate linearity of the conditional median implies proximity to the Gaussian family.

\end{abstract}

\section{Introduction}

Bayesian estimation sits at the heart of statistics, signal processing, and information theory.  Although nonlinear estimators inevitably arise in practice, linear estimators continue to play a central role — they are easy to compute, easier to analyze, and often serve as the baseline against more sophisticated methods. Consequently, characterizing when and why optimal estimators are linear has been a long-standing  line of inquiry.

We consider the classical Gaussian observation model
\begin{equation}
Y = X + Z, \qquad Z \sim \mathcal{N}(0,1),
\end{equation}
where the noise $Z$ is taken independently of the signal $X$, and we seek to estimate $X$ from the noisy observation $Y$. The optimal estimator depends on the loss function. Under $L^2$ loss, the optimal estimator is the conditional mean; under $L^1$ loss, it is the conditional median. In the former case, a plethora of methods exist showing that Gaussian is the unique prior distribution on $X$ that induces a linear optimal estimator; the interested reader is referred to \cite{L1EstimationJournal} for the review of some of these methods. 

In contrast, the $L^{1}$ case is more subtle: the conditional median lacks orthogonality properties, has no explicit integral representation, and is not preserved by the algebraic structures that make the conditional mean tractable. Until recently, it was unclear whether Gaussians are the only priors that yield a linear conditional median. Our prior work~\cite{L1EstimationJournal} resolved this by developing an operator–theoretic framework showing that, under Gaussian noise, only Gaussian priors induce linear medians. The proof relied on properties of tempered distributions and, crucially, on the non-negativity of the prior; without it, uniqueness fails and an infinite-dimensional family of signed measures produces linearity.

This paper takes the next step in this line of work by studying a stronger property: \emph{stability}. Uniqueness tells us that only Gaussian priors lead to exactly linear estimators. Stability asks a more practical question: if the optimal estimator is \emph{almost} linear, does the prior have to be \emph{almost} Gaussian? Answering this tells us how robust the linearity phenomenon really is.

We study stability for both $L^2$ and $L^1$ losses. For the $L^2$ case, we provide an explicit quantitative rate in terms of the Lévy metric, refining and strengthening classical results. For the $L^1$ case, we introduce a Hermite expansion framework and analyze  the adjoint of the linearity-defining operator to show that the Gaussian remains the unique  solution.

\subsection{Past Work}

There is a long history of characterizing when Bayesian estimators are linear. In the Gaussian noise model with squared-error loss, the conditional mean is affine if and only if the prior is Gaussian—a fact appearing in several forms via conjugacy arguments, orthogonality principles, and analytic characterizations~\cite{diaconis1979conjugate,akyol2012conditions,kailath2000linear,L1EstimationJournal}. These ideas have motivated recent extensions to broader exponential-family and multivariate settings~\cite{dytso_meta_derivative,barnes_multivariate_gaussian_ISIT2024}.

Much less was known for absolute-error loss: the conditional median lacks an orthogonality principle and similar analytic tools. Only recently was it shown that, in the Gaussian setting, Gaussians are also the unique priors yielding linear conditional medians~\cite{L1EstimationJournal}, a result requiring analytic techniques involving convolution identities, growth bounds, and tempered distributions.

Outside the Gaussian case, the behavior diverges. Under Poisson noise, gamma priors induce affine conditional means but not linear medians~\cite{chou_conjugate_exponential,dytso_poisson_conditional_mean_IT2020,chen1986bounds}. Conversely, different prior families can induce affine conditional medians in the Poisson model~\cite{barnes2025linearityPoisson}, revealing a mismatch between $L^1$ and $L^2$ linearity. Related linearity phenomena have also been studied for heavy-tailed models: \cite{chouity2026linearestimatorsstablevectors} showed that certain jointly symmetric $\alpha$-stable models admit linear conditional-mean estimators.

\subsection{Contributions}

This work makes the following contributions:
\begin{itemize}
    \item We establish a \emph{stability theory} for Bayesian estimation under $L^1$ and $L^2$ losses in the Gaussian noise model.
    \item In the $L^2$ case, we derive explicit quantitative bounds showing that if the conditional expectation is close to an affine function, then the prior must be close to Gaussian in the Lévy metric.
    \item In the $L^1$ case, we develop a functional-analytic proof technique based on Hermite expansions and adjoint operators, showing that approximately linear conditional medians imply proximity to the Gaussian family.
\end{itemize}

The paper outline is as follows. Section~\ref{sec:technical_preliminaries} is dedicated to some technical preliminaries, Section~\ref{sec:main_results} presents our main results, and in Section~\ref{sec:conclusion}, we conclude the paper.

\section{Technical Preliminaries}
\label{sec:technical_preliminaries}
In this section, we review and present some required technical preliminaries. 

\subsection{Monotonicity of 
Bayesian Estimators}

We will need the property that the conditional median estimator is a non-decreasing function, which is shown next.
\begin{lem} \label{lem:median_decreasing}
    Let $X$ be a continuous random variable. Then the conditional median $y \mapsto \mathsf{med}(X|Y=y)$ is a non-decreasing function. 
\end{lem}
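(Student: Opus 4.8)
The plan is to exploit the fact that under Gaussian noise the likelihood $f_{Y\mid X}(y\mid x)$ is a translation family in $y-x$, and translation families have the monotone likelihood ratio property; this forces the posterior law of $X$ given $Y=y$ to be stochastically increasing in $y$, from which monotonicity of every posterior quantile — in particular the median — follows at once. Concretely, I would first record the setup: since $X$ has a density $f_X$ and $Z\sim\mathcal N(0,1)$ has density $\varphi$, the observation $Y=X+Z$ has density $f_Y=f_X*\varphi$, which is strictly positive on $\bbR$, so conditioning on $\{Y=y\}$ is legitimate for every $y$, and the posterior density is $f_{X\mid Y}(x\mid y)=f_X(x)\,\varphi(y-x)/f_Y(y)$, again a continuous distribution in $x$ whose median is therefore meaningful. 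To work with a single-valued estimator I would fix the convention $m(y):=\inf\{t:\Prob(X\le t\mid Y=y)\ge 1/2\}$ and observe that continuity of $t\mapsto\Prob(X\le t\mid Y=y)$ gives $\Prob(X\le m(y)\mid Y=y)=\Prob(X\ge m(y)\mid Y=y)=\tfrac12$.

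The core step is a direct mass-comparison. Fix $y_1<y_2$ and set $m:=m(y_1)$. The ratio $\varphi(y_2-x)/\varphi(y_1-x)=\exp\!\big((y_2-y_1)x-\tfrac12(y_2^2-y_1^2)\big)$ is strictly increasing in $x$ because $y_2-y_1>0$, so with $c$ denoting its value at $x=m$ we have $\varphi(y_2-x)/\varphi(y_1-x)\le c$ for $x\le m$ and $\ge c$ for $x\ge m$. Multiplying the identity $\varphi(y_2-x)=\big(\varphi(y_2-x)/\varphi(y_1-x)\big)\varphi(y_1-x)$ by $f_X(x)$ and integrating over $(-\infty,m]$ and over $[m,\infty)$ separately yields $\int_{-\infty}^{m}f_X(x)\varphi(y_2-x)\,dx\le c\int_{-\infty}^{m}f_X(x)\varphi(y_1-x)\,dx$ and $\int_{m}^{\infty}f_X(x)\varphi(y_2-x)\,dx\ge c\int_{m}^{\infty}f_X(x)\varphi(y_1-x)\,dx$. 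By the choice of $m$ the two $\varphi(y_1-\cdot)$ integrals are equal (each equals $\tfrac12 f_Y(y_1)>0$), so the $\varphi(y_2-\cdot)$ mass on $(-\infty,m]$ is at most that on $[m,\infty)$; dividing by $f_Y(y_2)$ gives $\Prob(X\le m\mid Y=y_2)\le\tfrac12$, hence $m(y_2)\ge m=m(y_1)$. Since $y_1<y_2$ were arbitrary, $y\mapsto m(y)$ is non-decreasing. (One even gets strict monotonicity whenever $f_X$ charges both sides of $m$, since then both inequalities above are strict, but the stated conclusion is all that is needed.)

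I expect the only real subtleties to be bookkeeping rather than substance. First, the prior need not have connected support, so the posterior median can be a whole interval; this is why it is worth committing to the left-endpoint representative $m(\cdot)$ at the outset, and checking that the quantile inequality it produces, $\Prob(X\le m(y_1)\mid Y=y_2)\le\tfrac12\Rightarrow m(y_2)\ge m(y_1)$, actually matches that convention. Second, one should confirm the elementary analytic facts used in passing — that $f_Y=f_X*\varphi$ is everywhere positive and finite, that the posterior formula is valid, and that the posterior CDF is continuous — all of which are immediate from $X$ being a continuous random variable and $\varphi$ being a bounded, strictly positive, integrable density. Everything else is the one-line observation that a strictly increasing likelihood ratio tilts probability mass to the right; alternatively one may phrase the argument as ``monotone likelihood ratio ordering $\Rightarrow$ first-order stochastic dominance $\Rightarrow$ ordering of all quantiles,'' but I would keep the self-contained computation since it is barely longer.
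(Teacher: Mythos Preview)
Your proof is correct and is a genuinely different argument from the paper's. The paper proceeds analytically: it differentiates the identity $p=F_{X\mid Y}(F_{X\mid Y}^{-1}(p\mid y)\mid y)$ in $y$, invokes a ``meta-derivative'' identity $\partial_y F_{X\mid Y}(x\mid y)=\mathrm{Cov}(1_{\{X\le x\}},X\mid Y=y)$ valid for the Gaussian channel, and then appeals to the FKG inequality to sign that covariance and conclude $\partial_y F_{X\mid Y}^{-1}(p\mid y)\ge 0$. Your approach instead exploits the monotone likelihood ratio of the Gaussian location family directly, via the elementary mass-splitting comparison at the cut point $m=m(y_1)$; no differentiation, no external identities, and no FKG are needed. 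The trade-offs are clear: the paper's calculus route gives an explicit formula for the derivative of every conditional quantile (potentially useful elsewhere) but implicitly assumes enough smoothness to differentiate and to apply the cited identity; your argument is completely self-contained, works for any prior with a density, and in fact yields strict monotonicity for free since the prior must charge both sides of $m(y_1)$, making your inequality $\Prob(X\le m(y_1)\mid Y=y_2)\le\tfrac12$ strict and thereby dissolving the edge-case worry you raised about the left-endpoint convention.
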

\begin{proof}
First note the conditional quantile function $F_{X|Y}$ satisfies
\begin{equation}
p= F_{X|Y}( F_{X|Y}^{-1}(p|y) |y) , \,  p \in (0,1).
\end{equation}
Then, 
\begin{align}
0&=   \frac{\rm d}{ {\rm d} y} F_{X|Y}( F_{X|Y}^{-1}(p|y) |y)= f_{X|Y}( F_{X|Y}^{-1}(p|y) |y)    \frac{\rm d}{ {\rm d} y} F_{X|Y}^{-1}(p|y) + F_{X|Y}'( F_{X|Y}^{-1}(p|y) |y). \label{eq:implicit_expression}
\end{align}

Now, using the meta-derivative identity in \cite{dytso_meta_derivative}, we have that 
\begin{equation}
   \frac{\rm d}{ {\rm d} y} F_{X|Y}( x |y)  =  \frac{\rm d}{ {\rm d} y} \bbE[ 1_{X \le x} |Y=y] = {\rm Cov}( 1_{X \le x}, X |Y=y) .\label{eq:gen_der_of_cdf}
\end{equation} 
Combining \eqref{eq:implicit_expression} and \eqref{eq:gen_der_of_cdf}, we arrive at
\begin{equation}
   \frac{\rm d}{ {\rm d} y} F_{X|Y}^{-1}(p|y)=- \frac{  {\rm Cov}( 1_{ \{ X \le F_{X|Y}^{-1}(p|y) \}},X |Y=y)  }{ f_{X|Y}( F_{X|Y}^{-1}(p|y) |y)   }.
\end{equation}
Now note that $X \mapsto  1_{ \{ X \le t \}}$ is a non-increasing function and $X \mapsto X$ is an increasing function. Therefore, from the FKG inequalities we know ${\rm Cov}( 1_{ \{ X \le F_{X|Y}^{-1}(p|y) \}},X |Y=y)  \le 0$. This implies that the $ \frac{\rm d}{ {\rm d} y} F_{X|Y}^{-1}(p|y) \ge 0$ and concludes the proof. 
\end{proof}

Although we will not need similar properties for the conditional mean, the monotonicity does hold for the conditional mean; see, for example \cite{dytso_meta_derivative}. 

\subsection{Optimality and Orthogonality Conditions}

The conditional mean and conditional median have corresponding optimality conditions which are also known as orthogonality conditions \cite{dytso2017minimum}. These are given next.

\begin{prop} \text{      }  \label{prop:orthog_equations}
    \begin{itemize}
        \item ($L^2$ Case)  For every $g \in L^2(P_Y)$
        \begin{equation}
            \E[(X -\bbE[X|Y]) g(Y) ] =0 .\label{eq:Orthogonality}
        \end{equation}
        \item ($L^1$ Case) For every $g \in L^1(P_Y)$
        \begin{equation}
            \E[\sign(X - \mathsf{med}(X|Y)) g(Y) ] =0 .
            \label{eq:orthog_prop_l1}
        \end{equation}
    \end{itemize}
\end{prop}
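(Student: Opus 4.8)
The plan is to derive both orthogonality conditions directly from the defining optimality property of each estimator, using a first-order (variational) argument in the spirit of \cite{dytso2017minimum}.

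For the $L^2$ case, I would start from the fact that $m(Y) := \E[X\mid Y]$ minimizes $\E[(X - h(Y))^2]$ over all $h \in L^2(P_Y)$. Fix any $g \in L^2(P_Y)$ and consider the perturbation $h_\varepsilon = m + \varepsilon g$. The map $\varepsilon \mapsto \E[(X - m(Y) - \varepsilon g(Y))^2] = \E[(X - m(Y))^2] - 2\varepsilon \E[(X - m(Y))g(Y)] + \varepsilon^2 \E[g(Y)^2]$ is a quadratic in $\varepsilon$ that is minimized at $\varepsilon = 0$; setting its derivative at $0$ to zero gives $\E[(X - \E[X\mid Y])g(Y)] = 0$, which is \eqref{eq:Orthogonality}. (Alternatively, one can invoke the tower property: $\E[(X - \E[X\mid Y])g(Y)] = \E[\E[(X - \E[X\mid Y])g(Y)\mid Y]] = \E[g(Y)(\E[X\mid Y] - \E[X\mid Y])] = 0$, which is cleaner and avoids integrability bookkeeping — I would likely present this version.)

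For the $L^1$ case, let $\mu(Y) := \mathsf{med}(X\mid Y)$, which minimizes $\E[|X - h(Y)|]$ over $h \in L^1(P_Y)$. Fix $g \in L^1(P_Y)$ (or, to be safe about integrability of the loss difference, first take $g$ bounded and extend by density/dominated convergence at the end) and consider $\phi(\varepsilon) := \E[|X - \mu(Y) - \varepsilon g(Y)|]$. Using the pointwise identity $|a - \varepsilon b| - |a| = -\varepsilon b\,\sign(a) + o(\varepsilon)$ valid whenever $a \neq 0$ — and recalling that $X$ is continuous so $X \neq \mu(Y)$ almost surely — one gets, after justifying the interchange of limit and expectation (dominated convergence, using $\big||a-\varepsilon b| - |a|\big| \le |\varepsilon b|$), that $\phi$ has one-sided derivatives at $0$ equal to $\mp\,\E[\sign(X - \mu(Y))g(Y)]$. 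Since $\varepsilon = 0$ is a minimizer of $\phi$, the right derivative is $\ge 0$ and the left derivative is $\le 0$, forcing $\E[\sign(X - \mathsf{med}(X\mid Y))g(Y)] = 0$, which is \eqref{eq:orthog_prop_l1}. One subtlety worth a sentence: one must know $\mathsf{med}(X\mid Y)$ is itself a bona fide minimizer over $L^1(P_Y)$ (not merely that the conditional median minimizes pointwise in $y$), but this follows by conditioning, since $\E[|X - h(Y)|] = \E\big[\,\E[|X - h(Y)|\mid Y]\,\big]$ is minimized by choosing $h(y)$ to minimize $\E[|X - h(y)|\mid Y=y]$ for each $y$, i.e. $h(y) = \mathsf{med}(X\mid Y=y)$.

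The main obstacle is the $L^1$ differentiation step: absolute value is not differentiable at the origin, so the clean quadratic-expansion trick from the $L^2$ case is unavailable, and one must handle the $o(\varepsilon)$ term and the limit–expectation interchange carefully. The continuity of $X$ (hypothesis carried over from Lemma~\ref{lem:median_decreasing}'s setting, ensuring $P(X = \mu(Y)) = 0$) is exactly what rescues the argument by making $\sign(X - \mu(Y))$ unambiguous almost surely and the pointwise derivative well-defined a.s. For full rigor I would restrict to bounded $g$ for the variational computation and then pass to general $g \in L^1(P_Y)$ by approximation, noting $|\sign(\cdot)| \le 1$ so $\sign(X - \mu(Y))g(Y) \in L^1$ automatically.
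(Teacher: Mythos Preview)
The paper does not actually prove this proposition: it is stated as a known preliminary (``These are given next'') with a citation to \cite{dytso2017minimum}, so there is no paper proof to compare against. Your variational derivation is the standard one and is correct; the tower-property argument you mention for the $L^2$ case is indeed the cleanest route, and your handling of the $L^1$ case---dominated convergence via $\big||a-\varepsilon b|-|a|\big|\le |\varepsilon||b|$, together with $P(X=\mu(Y))=0$ to make $\sign(X-\mu(Y))$ well-defined a.s.---is exactly the right mechanism. One small remark: the proposition as stated in the paper does not explicitly impose continuity of $X$, but your observation that the $L^1$ identity can fail at atoms is correct; in the paper's setting $X$ is assumed to have a density $f$ (and the Gaussian noise model makes $X\mid Y$ absolutely continuous), so the hypothesis you invoke is available throughout.
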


To study stability in the $L^2$ case we will need the following proposition. We denote $i =\sqrt{-1}$ and $\phi_X(t) = \bbE[\rme^{itX}]$ as the characteristic function of $X$.

\begin{prop} \label{prop:orthog_fourier} Let $a = \frac{\sigma_X^2}{1+\sigma_X^2}$ where $\sigma_X^2<\infty$ is the variance of $X$. Then for every $g \in L^2(\bbR)$ 
    \begin{equation}
        \bbE[ (X - a Y) g(Y)] = (-i) \int_{-\infty}^\infty \hat{g}(t) \tilde{\phi}_X(t) \rme^{ - \frac{t^2}{2}} \rmd t \; , \label{eq:Orthogonal_expression}
    \end{equation}
    where 
    \begin{align}
        \tilde{\phi}_X(t) &= \frac{1}{1+\sigma^2_X }  \left(  \phi_X'(t)  + \sigma^2_X   t \phi_X( t) \right), \\
        g(x)&= \int_{-\infty}^\infty \hat{g}(t) \rme^{it x} \rmd t \; .
    \end{align} 
\end{prop}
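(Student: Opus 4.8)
\emph{Proof proposal.} The plan is to read both sides of \eqref{eq:Orthogonal_expression} as bounded linear functionals of $g$ on $L^2(\bbR)$, verify the identity on a convenient dense subclass, and then conclude by continuity.

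First I would check boundedness of the two functionals. Since $Y = X + Z$ is the convolution of $P_X$ with the standard Gaussian density, $Y$ has a density $f_Y$ with $\|f_Y\|_\infty \le \frac{1}{\sqrt{2\pi}}$; hence $\bbE[g(Y)^2] = \int g^2 f_Y \le \|f_Y\|_\infty \|g\|_{L^2(\bbR)}^2$, and because $\sigma_X^2 < \infty$ we have $X - aY = (1-a)X - aZ \in L^2(P)$. Cauchy--Schwarz then makes $g \mapsto \bbE[(X-aY)g(Y)]$ a bounded linear functional on $L^2(\bbR)$. On the right-hand side, $\sigma_X^2<\infty$ gives $\phi_X \in C^2$ with $|\phi_X'(t)| \le \bbE|X|$, so $|\tilde\phi_X(t)\rme^{-t^2/2}| \le \frac{1}{1+\sigma_X^2}(\bbE|X| + \sigma_X^2|t|)\rme^{-t^2/2}$, which is both integrable and square-integrable; since $g \mapsto \hat g$ is an $L^2$-isometry up to a constant, the right-hand side is a bounded linear functional of $g$ as well. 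Consequently it suffices to establish \eqref{eq:Orthogonal_expression} for $g$ ranging over a dense subclass of $L^2(\bbR)$ — I would take $g$ in the Schwartz class, for which $\hat g \in L^1$ and the inversion $g(x) = \int \hat g(t)\rme^{itx}\,\rmd t$ holds pointwise with absolute convergence.

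For such $g$, substitute $g(Y) = \int \hat g(t)\rme^{itY}\,\rmd t$ and interchange expectation and integration (justified by $\hat g \in L^1$ and $|\rme^{itY}| = 1$) to obtain $\bbE[(X-aY)g(Y)] = \int \hat g(t)\,\bbE[(X-aY)\rme^{it(X+Z)}]\,\rmd t$. Using independence of $X$ and $Z$ together with the identity $\bbE[W\rme^{itW}] = -i\,\frac{\rmd}{\rmd t}\phi_W(t)$, one computes $\bbE[X\rme^{it(X+Z)}] = -i\phi_X'(t)\rme^{-t^2/2}$ and $\bbE[(X+Z)\rme^{it(X+Z)}] = -i\rme^{-t^2/2}(\phi_X'(t) - t\phi_X(t))$, whence
\[
\bbE[(X-aY)g(Y)] = (-i)\int_{-\infty}^\infty \hat g(t)\,\rme^{-t^2/2}\big[(1-a)\phi_X'(t) + a t\phi_X(t)\big]\,\rmd t .
\]
Plugging in $a = \frac{\sigma_X^2}{1+\sigma_X^2}$, so that $1-a = \frac{1}{1+\sigma_X^2}$, the bracket equals $\frac{1}{1+\sigma_X^2}\big(\phi_X'(t) + \sigma_X^2 t\phi_X(t)\big) = \tilde\phi_X(t)$, which is exactly \eqref{eq:Orthogonal_expression}. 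The identity then extends from the Schwartz class to all of $L^2(\bbR)$ by the boundedness established above.

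The computational core — Fubini plus the moment--derivative identity for characteristic functions — is routine, so I expect the only delicate point to be the reduction step: confirming that both functionals extend continuously from the Schwartz class to all of $L^2(\bbR)$. The key facts there are the uniform boundedness of the density $f_Y$ (so that $g \mapsto g(Y)$ maps $L^2(\bbR)$ into $L^2(P)$) and the existence of an $L^1 \cap L^2$ envelope for $\tilde\phi_X(t)\rme^{-t^2/2}$, both of which follow from $\sigma_X^2 < \infty$ via $|\phi_X'| \le \bbE|X| < \infty$.
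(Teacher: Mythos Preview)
Your proof is correct and follows essentially the same route as the paper: substitute the Fourier inversion for $g(Y)$, swap expectation and $\rmd t$-integration, and compute $\bbE[(X-aY)\rme^{itY}]$ via independence and the identity $\phi_X'(t)=\bbE[iX\rme^{itX}]$. The only difference is that you wrap the interchange in a density/continuity argument (verify on Schwartz $g$, then extend via the bounded-functional observation), whereas the paper invokes Cauchy--Schwarz and Fubini--Tonelli directly for all $g\in L^2(\bbR)$; your version is arguably the more careful of the two, since the pointwise inversion $g(x)=\int \hat g(t)\rme^{itx}\,\rmd t$ is not literally available for arbitrary $g\in L^2$.
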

\begin{proof}
Start by noting that
\begin{align}
\E[(X -a Y ) g(Y) ] & =  \E \left[(X -a Y )  \int_\bbR \hat{g}(t) \rme^{it Y} \rmd t \right] \\
&= \int_{-\infty}^\infty  \hat{g}(t)  \E \left[(X -a Y )   \rme^{it Y} \right] \rmd t ,
\end{align}
where the order of integration change follows by Cauchy-Schwarz and Fubini-Tonelli since by assumption $g \in L^2(\bbR)$. 

Now, note that 
\begin{align}
\E[\rme^{itY} (X-aY) ]&=  \E[\rme^{itX} \rme^{it Z} ( ( 1-a)X  -aZ) ] \\
&=( 1-a) \E[X\rme^{itX} \rme^{it Z}]  - a \E[ Z \rme^{itX} \rme^{it Z}  ]\\
&=( 1-a) \E[X\rme^{itX}]   \rme^{-\frac{t^2}{2}}  - a \E[ Z \rme^{it Z}  ] \phi_X( t)\\
&= (1- a) (-i) \phi^{\prime}_X(t)\rme^{-\frac{t^2}{2}}  - a \E[ Z \rme^{it Z}  ] \phi_X( t) \label{eq:Char_Der_Expression}\\
&=( 1-a) (-i) \phi_X'(t) \, \rme^{-\frac{t^2}{2}}  - a (-i)  (-t) \rme^{-\frac{t^2}{2}} \phi_X( t)\\
&=  (-i)   \frac{\rme^{-\frac{t^2}{2}}}{1+\sigma^2_X }  \left(  \phi_X'(t)  + \sigma^2_X   t \phi_X( t) \right) ,
\end{align}
where \eqref{eq:Char_Der_Expression} follow by using that $\phi^{\prime}_X(t)= \E [i X \rme^{itX}]$. 
    \end{proof}
  
\begin{rem}
    Proposition~\ref{prop:orthog_fourier} can be used to establish that Gaussian  is only distribution that induces linearity.  To see this note that from \eqref{eq:Orthogonal_expression}, $ \bbE[ (X - a Y) g(Y)] =0$ for all $g \in L^2(\bbR)$ if and only if 
    \begin{equation}
        \phi_X'(t)  + \sigma^2_X   t \phi_X( t)  =0,  \, \forall t \in \bbR,
    \end{equation}
where the only solution to this differential equation is a Gaussian distribution.
\end{rem}

For the $L^1$ case we will need the following proposition, which is a trivial consequence of the orthogonality-like property in \eqref{eq:orthog_prop_l1}.
\begin{prop} Consider a system of equations
    \begin{align} \label{eq:bd}
T_a[f](y) = \int_{-\infty}^\infty f(x)\mathsf{sign}(x-ay)\exp\left(-\frac{(x-y)^2}{2}\right)dx = 0 \; , \; \forall y\in\mathbb{R}
\end{align}
for some $0\leq a <1$. The density $f$ satisfies\footnote{Here and throughout the paper, if we say \eqref{eq:bd} is satisfied, we mean the equation holds for \emph{almost all} $y$ with respect to Lebesgue measure.} \eqref{eq:bd} if and only if the conditional median
\begin{align}
\mathsf{med}(X|Y=y) = ay \; ,  \forall y\in\mathbb{R}. 
\end{align}
\end{prop}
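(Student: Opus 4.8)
The plan is to unwind both sides of the claimed equivalence into the same scalar identity indexed by $y$, using the explicit form of the conditional law in the Gaussian channel. Since $Z\sim\cN(0,1)$ is independent of $X$, the marginal density of $Y$ is $f_Y(y)=\frac{1}{\sqrt{2\pi}}\int_{\bbR}f(x)\exp(-(x-y)^2/2)\,\rmd x$, which is \emph{strictly positive} for every $y\in\bbR$ because $f\ge 0$ with $\int f=1$ and the Gaussian kernel is everywhere positive; moreover the conditional density is $f_{X|Y}(x|y)=f(x)\exp(-(x-y)^2/2)/(\sqrt{2\pi}\,f_Y(y))$. Hence
\begin{equation}
T_a[f](y)=\sqrt{2\pi}\,f_Y(y)\,\E\big[\sign(X-ay)\mid Y=y\big],
\end{equation}
so, because $f_Y(y)>0$, vanishing of $T_a[f]$ at (almost) every $y$ is equivalent to vanishing of $y\mapsto\E[\sign(X-ay)\mid Y=y]$ at (almost) every $y$.

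For the direction ``$\mathsf{med}(X|Y=y)=ay$ for all $y$ $\Rightarrow$ \eqref{eq:bd}'', I would invoke the $L^1$ orthogonality relation \eqref{eq:orthog_prop_l1}: under the hypothesis $\sign(X-\mathsf{med}(X|Y))=\sign(X-aY)$, so $\E[\sign(X-aY)g(Y)]=0$ for every $g\in L^1(P_Y)$. Conditioning on $Y$ and using the display above, this reads $\int_{\bbR}T_a[f](y)\,g(y)\,\rmd y=0$ for all such $g$; taking $g$ to be indicators of arbitrary measurable sets (which lie in $L^1(P_Y)$ since $P_Y$ is a probability measure) forces $T_a[f](y)=0$ for Lebesgue-almost every $y$. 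For the converse, suppose $T_a[f](y)=0$ for almost every $y$. Then $\E[\sign(X-ay)\mid Y=y]=0$ a.e., i.e.\ $\Prob(X>ay\mid Y=y)=\Prob(X<ay\mid Y=y)$. Because $X$ is a continuous random variable the conditional law of $X$ given $Y=y$ is non-atomic (it has the density $f_{X|Y}(\cdot|y)$ displayed above), so $\Prob(X=ay\mid Y=y)=0$ and the two probabilities must each equal $1/2$; by definition this says $ay$ is a conditional median, i.e.\ $\mathsf{med}(X|Y=y)=ay$.

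I expect no serious obstacle here — the content is essentially bookkeeping that passes between $T_a[f](y)$, the conditional expectation of a sign, and the defining balance condition of a median. The two points that warrant a little care are (i) recording that $f_Y>0$ everywhere, so that ``for almost all $y$'' statements are unaffected by multiplying or dividing through by $\sqrt{2\pi}\,f_Y(y)$; and (ii) the treatment of the event $\{X=ay\}$, where continuity of $X$ enters — if one wished to drop that hypothesis one would phrase the median condition with the two-sided inequalities $\Prob(X\le ay\mid Y=y)\ge\tfrac12\ge\Prob(X\ge ay\mid Y=y)$ and note non-atomicity collapses them. A minor related subtlety is that the conditional median need not be unique when $f$ vanishes on an interval, but the statement only asserts that $ay$ \emph{is} a median, which the argument above delivers.
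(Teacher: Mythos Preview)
Your proposal is correct and matches the paper's approach: the paper does not spell out a proof at all, merely remarking that the proposition ``is a trivial consequence of the orthogonality-like property in~\eqref{eq:orthog_prop_l1}'', and your argument is precisely a careful unpacking of that remark via the identity $T_a[f](y)=\sqrt{2\pi}\,f_Y(y)\,\E[\sign(X-ay)\mid Y=y]$ together with strict positivity of $f_Y$. If anything, for the forward direction you could bypass the duality-with-test-functions step and observe directly that $\mathsf{med}(X|Y=y)=ay$ gives $\E[\sign(X-ay)\mid Y=y]=0$ pointwise, but what you wrote is fine and faithful to the paper's hint.
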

\noindent We restrict our attention to $a\in [0,1)$ since these are the only admissible values of $a$ \cite{L1EstimationJournal}.

\subsection{Hermite Functions}
For $n \in \{0,1,2,\ldots \}$ let $H_n$ be the Hermite functions defined by
\begin{equation}
H_n(x)=\frac{1}{K_n}e^{\frac{x^2}{2 \sigma^2}} \frac{\rmd^n }{ \rmd x^n} e^{-\frac{x^2}{ \sigma^2}}
\end{equation} 
with $\sigma^2 = \frac{a}{1-a}$. The normalization factors
\begin{align}
K_n = \frac{\pi^\frac{1}{4}2^\frac{n}{2}\sqrt{n!}}{\sigma^{n-\frac{1}{2}}}
\end{align}
are set so that $\|H_n\|_2 = 1$. For $k, n \in  \{0,1,2,\ldots \}$ and $k \neq n$ 
\begin{equation}
\int_{-\infty}^\infty H_k(x) H_n(x)  \rmd x= 0 \; . 
\end{equation} 
In other words, $ \{ H_n(x) \}_{n=0}^\infty$ forms an orthonormal basis. This basis is known to be complete for $L^2(\mathbb{R})$ \cite{hermites}.

\subsection{The Adjoint Operator $T_a^*$} 
In this section we define and study the \emph{adjoint} of $T_a$.
 We view $T_a$ as a mapping from $L^2(\bbR)$ to itself, i.e., $T_a: L^2(\bbR) \to L^2(\bbR)$, with the usual inner product on $L^2(\bbR)$. We use $\mathcal{F}$ to denote the Fourier transform for $L^2$ functions with convention
\begin{align}
\mathcal{F}[f](x) = \int f(t)e^{-2\pi i x t}dt \; .
\end{align}
Recall that the adjoint operator is given by 
\begin{equation}
T_a^*[f](x) =  \int_{-\infty}^\infty k_a(x,y)  f(y) \rmd y =\int_{-\infty}^\infty  \sign(x-ay) \exp\left(-\frac{(y-x)^2}{2}\right) f(y) \rmd y, \, x \in \bbR.  \label{eq:struction_of_adjoint}
\end{equation} 
\begin{prop}
For $a>0$ and $f \in L^2(\bbR)$
\begin{align}
T^*_a[f ](x)&=-T_{\frac{1}{a}}[f ](x), \, x\in \bbR. 
\end{align}

\end{prop}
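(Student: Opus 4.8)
The plan is to unwind the definition of $T_a^*$ in \eqref{eq:struction_of_adjoint} and compare it directly to $T_{1/a}$ as defined in \eqref{eq:bd}. Starting from
\begin{align}
T_a^*[f](x) = \int_{-\infty}^\infty \sign(x - ay)\exp\left(-\frac{(y-x)^2}{2}\right) f(y)\, \rmd y,
\end{align}
the first move is to rewrite the sign term. Since $a > 0$, we have $\sign(x - ay) = \sign\!\big(a(\tfrac{x}{a} - y)\big) = \sign(\tfrac{x}{a} - y) = -\sign(y - \tfrac{x}{a})$. Substituting this identity gives
\begin{align}
T_a^*[f](x) = -\int_{-\infty}^\infty \sign\!\left(y - \tfrac{1}{a}x\right)\exp\left(-\frac{(y-x)^2}{2}\right) f(y)\, \rmd y.
\end{align}
Now I would compare this with $T_{1/a}[f](x)$, which by \eqref{eq:bd} with $a$ replaced by $1/a$ equals $\int f(y)\sign(y - \tfrac{1}{a}x)\exp\!\big(-\frac{(y-x)^2}{2}\big)\, \rmd y$ — noting that the Gaussian kernel in \eqref{eq:bd} is symmetric in its two arguments, so $\exp(-(x-y)^2/2) = \exp(-(y-x)^2/2)$, and the roles of the integration variable and the free variable match up exactly. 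This yields $T_a^*[f](x) = -T_{1/a}[f](x)$.

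The one genuine subtlety — and the step I would treat most carefully — is the derivation of the stated form of the adjoint kernel $k_a(x,y) = \sign(x - ay)\exp(-(y-x)^2/2)$ itself, i.e. justifying \eqref{eq:struction_of_adjoint}. For $T_a: L^2(\bbR) \to L^2(\bbR)$ acting as an integral operator with kernel $\kappa_a(y,x) = \sign(x - ay)\exp(-(x-y)^2/2)$ (reading off from \eqref{eq:bd}, where $y$ is the ``output'' variable and $x$ the integration variable), the adjoint has kernel $\overline{\kappa_a(x,y)} = \sign(y - ax)\exp(-(y-x)^2/2)$. A careful reader should check that this matches the expression written in \eqref{eq:struction_of_adjoint} after the obvious relabeling, and that the kernel is genuinely Hilbert–Schmidt (the Gaussian factor gives $\int\!\int \kappa_a^2 < \infty$), so that $T_a$ is a bounded operator on $L^2(\bbR)$ and the adjoint identity $\langle T_a f, g\rangle = \langle f, T_a^* g\rangle$ holds by Fubini–Tonelli. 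Once this is in place, the sign manipulation above is purely algebraic.

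In summary, the proof is: (i) record that $T_a$ is Hilbert–Schmidt hence bounded, so $T_a^*$ is the integral operator with the transposed kernel, giving \eqref{eq:struction_of_adjoint}; (ii) use $a > 0$ to pull the factor $a$ out of $\sign(x - ay)$ and flip the sign to rewrite $\sign(x-ay) = -\sign(y - x/a)$; (iii) observe the Gaussian kernel is symmetric, so the resulting integral is exactly $-T_{1/a}[f](x)$. I do not anticipate any obstacle beyond the bookkeeping in step (i); the restriction $a > 0$ is precisely what is needed in step (ii) so that dividing by $a$ preserves the sign, and it also ensures $1/a$ is a well-defined (indeed, since $a \in [0,1)$, we have $1/a > 1$, consistent with the earlier remark that only $a \in [0,1)$ is admissible for the forward problem).
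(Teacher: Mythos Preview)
Your core argument is correct and identical to the paper's: start from \eqref{eq:struction_of_adjoint}, use $a>0$ to rewrite $\sign(x-ay)=-\sign(y-x/a)$, and recognize the resulting integral as $-T_{1/a}[f](x)$ via the symmetry of the Gaussian kernel. That is exactly the paper's three-line computation.

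One genuine error in your added justification: the operator $T_a$ is \emph{not} Hilbert--Schmidt. The parenthetical ``the Gaussian factor gives $\int\!\int \kappa_a^2 < \infty$'' is false, since $|\kappa_a(y,x)|^2 = \exp(-(x-y)^2)$ and $\int_{\bbR}\int_{\bbR}\exp(-(x-y)^2)\,\rmd x\,\rmd y = \sqrt{\pi}\int_{\bbR}\rmd y = \infty$. Boundedness of $T_a$ on $L^2(\bbR)$ instead follows from the pointwise domination $|T_a[f](y)|\le (|f|\ast e^{-(\cdot)^2/2})(y)$ together with Young's inequality, which gives $\|T_af\|_2\le \|e^{-x^2/2}\|_1\|f\|_2$. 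With this fix, your step (i) goes through and the adjoint kernel computation is justified by Fubini. (Also, a small bookkeeping slip: the transposed kernel is $\kappa_a(y,x)$ with the roles of input and output swapped, which \emph{directly} gives $\sign(x-ay)\exp(-(y-x)^2/2)$ as in \eqref{eq:struction_of_adjoint}; no additional relabeling is needed, and your intermediate expression $\sign(y-ax)$ comes from swapping arguments once too many.)
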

\begin{proof}
Starting with the structure of the adjoint, we have that
\begin{align}
T^*_a[f ](x)&=\int_{-\infty}^\infty  \sign(x-ay) \exp\left(-\frac{(y-x)^2}{2}\right) f(y) \rmd y\\
&= - \int_{-\infty}^\infty  \sign \left(y- \frac{x}{a} \right) \exp\left(-\frac{(x-y)^2}{2}\right) f(y) \rmd y\\
&=-T_{\frac{1}{a}}[f ](x).
\end{align}
\end{proof} 

\begin{prop}\label{prop:Hermite_output} For $n\geq 1$, there exist functions $\varphi_n$ such that
\begin{equation} 
T^*_a [\varphi_n] = H_n,
\end{equation} 
with the following properties:
\begin{itemize}
%\item[i.] $\| e^{-\frac{(1-a)y}{2}}  \varphi_n \|_1 <\infty $
\item[i.] $\| e^{-\frac{(1-a)y}{2}}  \varphi_n \|_2 <\infty $
\item[ii.] $\|\mathcal{F}[e^{-\frac{(1-a)y}{2}}\varphi_n]\|_1 = \mathcal{O}(n^\frac{1}{4}) \; .$
\end{itemize}
\end{prop}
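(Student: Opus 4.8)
The plan is to construct $\varphi_n$ explicitly by inverting the relation $T_a^*[\varphi_n] = H_n$. Using the identity $T_a^* = -T_{1/a}$ proved just above, the equation $T_a^*[\varphi_n] = H_n$ becomes $T_{1/a}[\varphi_n] = -H_n$, i.e.
\begin{equation}
\int_{-\infty}^\infty \varphi_n(x)\,\sign\!\left(x - \tfrac{1}{a}y\right)\exp\!\left(-\tfrac{(x-y)^2}{2}\right)dx = -H_n(y)\;.
\end{equation}
The first step is to remove the sign by differentiating in $y$: since $\frac{\partial}{\partial y}\sign(x - y/a) = -\frac{2}{a}\delta(x - y/a)$ (distributionally) and $\frac{\partial}{\partial y}\exp(-(x-y)^2/2) = (x-y)\exp(-(x-y)^2/2)$, differentiating the displayed equation produces a relation in which the $\delta$-term isolates $\varphi_n(y/a)$ against a Gaussian weight, plus a convolution term. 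Rearranging, one obtains $\varphi_n$ at the point $y/a$ in terms of $H_n'(y)$ and a Gaussian convolution of $\varphi_n$ itself; alternatively, and more cleanly, I would posit the ansatz that $\varphi_n$ is a Gaussian times a polynomial of degree $n$ — which is forced by the fact that $H_n$ is (up to the $e^{x^2/(2\sigma^2)}$ factor) a Gaussian times a degree-$n$ polynomial and $T_a$ preserves such a class — and then match coefficients. The parameter $\sigma^2 = \frac{a}{1-a}$ is chosen precisely so that the Gaussian kernel $\exp(-(x-y)^2/2)$ and the Hermite-defining Gaussian $\exp(-x^2/\sigma^2)$ combine correctly; I expect $\varphi_n(x)$ to have the form $c_n\, p_n(x)\, e^{-\alpha x^2}$ with $\alpha$ and $p_n$ determined by this matching, and in particular $p_n$ a rescaled Hermite polynomial of degree $n$.

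The second step is to verify the two quantitative properties. Property (i), $\|e^{-(1-a)y/2}\varphi_n\|_2 < \infty$, should follow immediately from the explicit Gaussian-times-polynomial form: the extra exponential factor $e^{-(1-a)y/2}$ is harmless (indeed the bare $\varphi_n$ is already in $L^2$ if $\alpha > 0$), so this is just a finiteness check on a Gaussian integral against a squared polynomial. Property (ii) is the substantive one: I need $\|\mathcal{F}[e^{-(1-a)y/2}\varphi_n]\|_1 = \mathcal{O}(n^{1/4})$. Writing $g_n(y) = e^{-(1-a)y/2}\varphi_n(y)$, this is again a Gaussian times a polynomial of degree $n$ (the linear-exponential factor only shifts the Gaussian's center and rescales), so $\mathcal{F}[g_n]$ is also a Gaussian times a degree-$n$ polynomial — essentially a Hermite function of index $n$ up to normalization and scaling. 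The $L^1$ norm of a normalized Hermite function $h_n$ is known to grow like $n^{1/4}$ (this is the classical bound $\|h_n\|_1 \asymp n^{1/4}$, coming from $\|h_n\|_\infty \asymp n^{-1/12}$ on the oscillatory region of width $\asymp \sqrt{n}$ together with the Airy-type decay at the turning points, or more simply from Cauchy–Schwarz $\|h_n\|_1 \le \|h_n\|_2 \cdot (\text{effective support})^{1/2} \asymp \sqrt{n}^{1/2}$ after localizing). So the plan for (ii) is: express $g_n$ in terms of a scaled Hermite function, track how the scaling and the normalization constant $K_n$ (which carries the $n$-dependence $\sqrt{n!}\,2^{n/2}$) interact, and invoke the $n^{1/4}$ Hermite $L^1$ estimate; the normalization $\|H_n\|_2 = 1$ is what keeps the constant in front from blowing up faster than $n^{1/4}$.

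\textbf{Main obstacle.} The delicate point is getting property (ii) with the \emph{sharp} exponent $1/4$ rather than a cruder bound like $\sqrt{n}$ or $n^{1/2+\epsilon}$. A naive Cauchy–Schwarz against the full line gives $\|\mathcal{F}[g_n]\|_1 \le \sqrt{2R}\,\|\mathcal{F}[g_n]\|_2$ over $[-R,R]$ plus a tail, and optimizing $R$ against Gaussian tail decay $e^{-cR^2}$ gives $R \asymp \sqrt{\log n}$, hence only a $(\log n)^{1/4}$-type bound if the $L^2$ mass were uniformly spread — but it is not, since Hermite functions concentrate on an interval of length $\asymp\sqrt{n}$, which is what actually produces the $n^{1/4}$. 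So I must use the genuine pointwise asymptotics of Hermite functions (the Plancherel–Rotach / Airy asymptotics, or at least the uniform bound $\|h_n\|_\infty \le C n^{-1/12}$ of Szegő) rather than just $L^2$ normalization, and then carefully account for the fact that the scaling parameter relating $g_n$ to a standard Hermite function is \emph{fixed} (depending only on $a$, not on $n$) so that it does not interfere with the $n$-asymptotics. Tracking the interplay of the $a$-dependent scaling, the $e^{-(1-a)y/2}$ shift, and the $n$-dependent Hermite normalization without dropping or double-counting factors is the part that needs care; everything else is routine Gaussian bookkeeping.
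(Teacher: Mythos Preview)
Your central ansatz --- that $\varphi_n$ is a Gaussian times a degree-$n$ polynomial --- is incorrect, and with it the whole plan collapses. The claim that ``$T_a$ preserves such a class'' is false: the kernel involves $\sign(x-ay)$, so applying $T_a$ (or $T_a^*$) to a polynomial-times-Gaussian produces incomplete Gaussian integrals, i.e.\ error functions, not another polynomial-times-Gaussian. Concretely, if $\varphi_n(x)=p_n(x)e^{-\alpha x^2}$ then $T_a^*[\varphi_n](x)$ contains terms of the form $\erf(\cdot)$ in $x$, which cannot match $H_n(x)$. So you cannot posit this form and match coefficients, and consequently $\cF[g_n]$ is \emph{not} a rescaled Hermite function, making the appeal to $\|h_n\|_1\asymp n^{1/4}$ inapplicable as stated.

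What the paper actually does is work in the Fourier domain after first rewriting $T_a^*[\varphi_n]=H_n$ as a genuine convolution. Completing the square in the exponent yields
\[
e^{-\frac{a(1-a)}{2}x^2}H_n(ax)=\Big(\sign(\cdot)\,e^{-a(\cdot)^2/2}\Big)\ast\Big(e^{-\frac{1-a}{2}(\cdot)^2}\varphi_n\Big)(x),
\]
so taking Fourier transforms and dividing gives
\[
\cF\big[e^{-\frac{1-a}{2}y^2}\varphi_n\big](\omega)=\frac{\cF\big[e^{-\frac{a(1-a)}{2}x^2}H_n(ax)\big](\omega)}{\cF\big[\sign(x)e^{-ax^2/2}\big](\omega)}.
\]
The numerator is $C_n\,\omega^n e^{-\beta\omega^2}$ (polynomial-Gaussian), but the denominator is the \emph{Dawson function}, which is not polynomial-Gaussian; it behaves like $c\omega$ near $0$ and is bounded away from zero elsewhere. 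The paper uses explicit two-sided bounds on the Dawson function to show the ratio is dominated by $\frac{C}{K_n}(2\pi/a)^n(|\omega|^{n+1}+|\omega|^{n-1})e^{-\beta\omega^2}$, then integrates via the Gamma function and applies Stirling to obtain the $n^{1/4}$ rate. This Dawson-function division is the missing idea in your plan; without it you have no explicit handle on $\varphi_n$, and neither the differentiation-in-$y$ route (which leaves you with an integral equation, not a formula) nor the polynomial ansatz gets you there. (Incidentally, the weight in the statement should be read as $e^{-\frac{1-a}{2}y^2}$, not the linear exponential you took it to be.)
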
 
\begin{proof}
Without yet knowing that such functions exist, consider the following formal manipulations:

\begin{align}
H_n(ax) & = \int_{-\infty}^\infty  \sign(ax-ay) \exp\left(-\frac{(ax-y)^2}{2}\right) \varphi_n(y) \rmd y\ \\
& = e^{\frac{a(1-a)}{2}x^2}\int_{-\infty}^\infty\sign(x-y) \exp\left(-\frac{a(x-y)^2}{2}\right) e^{-\frac{(1-a)}{2}y^2} \varphi_n(y) \rmd y
\end{align}
and
\begin{align}
\mathcal{F}\left[e^{-\frac{a(1-a)}{2}x^2}H_n(ax)\right] = \mathcal{F}\left[\sign(x)\exp\left(-\frac{ax^2}{2}\right)\right]\mathcal{F}\left[ e^{-\frac{(1-a)}{2}x^2} \varphi_n(x)\right] \; .
\end{align}
In light of this, we set
\begin{align}
e^{-\frac{(1-a)}{2}x^2} \varphi_n(x) = \mathcal{F}^{-1}\left[\frac{\mathcal{F}\left[e^{-\frac{a(1-a)}{2}x^2}H_n(ax)\right]}{\mathcal{F}\left[\sign(x)\exp\left(-\frac{ax^2}{2}\right)\right]}\right](x) \; . \label{eq:define_varphi}
\end{align}
In order to ensure $\varphi_n$ is well-defined, and that property \emph{i} holds, the claim is that the function inside the large brackets above is square-integrable. The numerator is equal to
\begin{align}
\mathcal{F}\left[e^{-\frac{a(1-a)}{2}x^2}H_n(ax)\right](\omega) = \sqrt{\frac{\pi}{a(1-a)}}\frac{1}{aK_n}\left(\frac{2\pi i}{a}\right)^n \omega^ne^{-\frac{\pi^2\omega^2}{a(1-a)}} \; 
\end{align}
while the denominator is equal to
\begin{align}
c(\omega)= \sqrt{ \frac{1}{a} }    \exp \left(  \frac{- 4 \pi^2 \omega^2}{2a} \right)  \erf \left(  \frac{ i 2\pi w }{  \sqrt{2  a } }\right) \; .
\end{align}  
We note that the function $c(\omega)$ is also known as \emph{the Dawson function} and, in particular, we have the following bounds on this function \cite{janssen2021bounds}:
 \begin{align}
\frac{ \rme^{-\omega^2}  | \rme^{\omega^2}-1|}{2 |\omega| } \le  \left | \sqrt{a} c \left( \frac{\sqrt{a}}{\sqrt{2} \pi} \omega \right) \right|  \le \frac{ \rme^{-\omega^2}  | \rme^{\omega^2}-1|}{|\omega| }. \label{eq:Bounds_On_D_function} 
\end{align}
For $n\geq 1$, both numerator and denominator in \eqref{eq:define_varphi} have simple zeros at the origin, and the bounds \eqref{eq:Bounds_On_D_function} show the ratio is square-integrable.

Property \emph{ii} can be verified as follows:
\begin{align}
\left|\frac{\mathcal{F}\left[e^{-\frac{a(1-a)}{2}x^2}H_n(ax)\right](\omega)}{c(\omega)} \right| & \leq  \frac{C}{K_n}\left(\frac{2\pi i}{a}\right)^n \big|\omega^{n+1} +\omega^{n-1}\big|e^{-\frac{\pi^2\omega^2}{a(1-a)}},
\end{align}
for a constant $C$ that is independent of $n$, and therefore
\begin{align}
\|\mathcal{F}[e^{-\frac{(1-a)y}{2}}\varphi_n]\|_1 \le  \frac{C}{K_n}\left(\frac{2\pi i}{a}\right)^n  \left( I_{n+1} + I_{n-1}\right) ,\label{eq:triangle_in}
\end{align}
where 
\begin{equation}
I_m = \int_{-\infty}^\infty|\omega|^m \rme^{-\beta \omega^2} \rmd \omega  = \beta^{-  \frac{m+1}{2}  } \Gamma \left( \frac{m+1}{2} \right) ,\label{eq:I_m terms}
\end{equation}
and where $\beta = \frac{\pi^2}{ a(1-a)}$.  Next, note that
\begin{equation}
\frac{1}{K_n} \left(\frac{2\pi i}{a}\right)^n \beta^{ -\frac{n}{2}} =\frac{C_a}{\sqrt{n!}} 2^{ \frac{n}{2}}, \label{eq:combined_constans}
\end{equation}
where $C_a$ is a constant that depends on $a$ only.  Now, combining \eqref{eq:triangle_in}, \eqref{eq:I_m terms} and \eqref{eq:combined_constans}, we arrive at
\begin{align}
\left \|\mathcal{F}[e^{-\frac{(1-a)y}{2}}\varphi_n] \right\|_1 \le \frac{C_a'}{\sqrt{n!}} 2^{ \frac{n}{2}} \left( \Gamma \left( \frac{n+2}{2} \right) +\Gamma \left( \frac{n}{2} \right) \right) \le 2  \frac{C_a'}{\sqrt{n!}} 2^{ \frac{n}{2}}  \Gamma \left( \frac{n+2}{2} \right) .
\end{align}
Now using Stirling's approximation,
\begin{align}
\frac{ 2^{ \frac{n}{2}}}{\sqrt{n!}}  \Gamma \left( \frac{n+2}{2} \right)  \approx  \frac{2^{ \frac{n}{2}}}{ (2 \pi)^{1/4} n^{n/2+1/4}} \sqrt{2 \pi}  \left( \frac{n}{2} \right)^{ \frac{n+1}{2}} \propto n^{1/4} .
\end{align}
This concludes the proof. 
\end{proof} 

\section{Main Results}
\label{sec:main_results}
This section is dedicate to our main results. We start with the exposition for the $L^2$ case.
\subsection{Stability in the $L^2$ Case}

We show that if the conditional expectation is close to a linear function in mean-squared error, then the input distribution must be close to a Gaussian in the L\'evy metric, which is defined next.
Let $P$ and $Q$ be two cumulative distribution functions. Then, the \emph{L\'evy distance} between   $P$ and $Q$ is defined as 
\begin{equation}
\mathsf{Levy}(P,Q)  = \inf  \left\{  h  \ge 0:   Q(x-h)-h \le P(x)        \le  Q(x+h)+h,   \forall x \in \mathbb{R} \right\}. 
\end{equation} 
An important property of the L\'evy metric is that  convergence of distributions in the L\'evy metric is equivalent to  weak convergence of distributions \cite{dudley2002real}.  We note that the proof of stability for the $L^2$ case was also considered in \cite{du2017strong}, but we provide a stream-lined version with improved constants for completeness.

\begin{thm}  \label{thm:L2case} Let $a =\frac{\sigma_X^2}{1+\sigma_X^2}$ where $\sigma_X^2$ is the variance of $X$. Suppose that 
\begin{equation}
\E[ |aY  - \E[X|Y]|^2]  \le \epsilon, 
\end{equation} 
then 
\begin{equation}
\mathsf{Levy} \left(P_X , \mathcal{N}(0,\sigma^2_X) \right)   \le  \min_{\delta >0}  \left [\frac{2 (1+\sigma_X^2) }{\pi } \epsilon^{\frac{\delta}{2(1+\delta)}}    +\frac{24}{\pi\sqrt{\frac{2\pi\sigma_X^2}{1+\delta} \log \frac{1}{\epsilon}}} \right] . 
\end{equation} 
\end{thm}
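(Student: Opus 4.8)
The plan is to control the stronger Kolmogorov distance $\Delta:=\sup_{x\in\bbR}|F_X(x)-F_G(x)|$, where $F_X$ is the CDF of $X$ and $F_G$ the CDF of $\cN(0,\sigma_X^2)$, since $\mathsf{Levy}\left(P_X,\cN(0,\sigma_X^2)\right)\le\Delta$ (if $|F_X-F_G|\le h$ everywhere, the defining inequalities of the L\'evy metric hold with that $h$, by monotonicity of $F_G$). I would bound $\Delta$ through Esseen's smoothing inequality: for every $T>0$,
\begin{equation}
\Delta\le\frac{2}{\pi}\int_0^T\left|\frac{\phi_X(t)-\phi_G(t)}{t}\right|\rmd t+\frac{24\,m}{\pi T},\qquad m:=\sup_x f_G(x)=\frac{1}{\sqrt{2\pi\sigma_X^2}},
\end{equation}
where $\phi_G(t)=\rme^{-\sigma_X^2 t^2/2}$ is the characteristic function of $\cN(0,\sigma_X^2)$. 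Everything then reduces to showing the hypothesis forces $\phi_X$ close to $\phi_G$ on a window $[-T,T]$ with $T$ of order $\sqrt{\log(1/\epsilon)}$, and then optimizing $T$.

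First I would convert the hypothesis into an $L^2$ bound on the ``defect'' $\psi(t):=\tilde\phi_X(t)\rme^{-t^2/2}$ appearing in Proposition~\ref{prop:orthog_fourier}. For $g\in L^2(\bbR)$, the $L^2$ orthogonality relation \eqref{eq:Orthogonality} gives $\E[(aY-\E[X|Y])g(Y)]=-\E[(X-aY)g(Y)]$, which Proposition~\ref{prop:orthog_fourier} rewrites as $i\int_{-\infty}^\infty\hat g(t)\psi(t)\,\rmd t$. On the other hand, Cauchy--Schwarz in $L^2(P_Y)$, the bound $\|f_Y\|_\infty\le 1/\sqrt{2\pi}$ (as $f_Y$ is a convolution with the standard Gaussian density), and Plancherel in the normalization $g(x)=\int\hat g(t)\rme^{itx}\rmd t$ give $|\int\hat g\,\psi|\le\sqrt{\epsilon}\,\|g\|_{L^2(P_Y)}\le(2\pi)^{1/4}\sqrt{\epsilon}\,\|\hat g\|_2$; since $\hat g$ ranges over all of $L^2(\bbR)$, duality yields $\|\psi\|_2\le(2\pi)^{1/4}\sqrt{\epsilon}$ (note $\psi\in L^2$ because $\tilde\phi_X$ grows at most linearly).

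Next I would upgrade this averaged estimate to a pointwise bound on $\phi_X-\phi_G$ via the integrating factor that trivializes the linearity ODE. Put $w(t):=\rme^{\sigma_X^2 t^2/2}\phi_X(t)$, so $w(0)=1$, $\phi_X(t)-\phi_G(t)=\rme^{-\sigma_X^2 t^2/2}(w(t)-1)$, and $w'(t)=(1+\sigma_X^2)\rme^{\sigma_X^2 t^2/2}\tilde\phi_X(t)=(1+\sigma_X^2)\rme^{(1+\sigma_X^2)t^2/2}\psi(t)$. Integrating from $0$ gives
\begin{equation}
\phi_X(t)-\phi_G(t)=(1+\sigma_X^2)\int_0^t\rme^{s^2/2-\sigma_X^2(t^2-s^2)/2}\,\psi(s)\,\rmd s.
\end{equation}
The crucial point is to keep the factor $\rme^{-\sigma_X^2(t^2-s^2)/2}\le 1$ (for $0\le s\le t$) inside the Cauchy--Schwarz step: with $\int_0^t\rme^{s^2}\rmd s\le t\,\rme^{t^2}$ this gives $|\phi_X(t)-\phi_G(t)|\le(1+\sigma_X^2)\sqrt t\,\rme^{t^2/2}\,\|\psi\|_2$, so the $t$-power $t^{-1/2}$ after dividing by $t$ is integrable at the origin and, more importantly, the exponential rate is $\rme^{t^2/2}$, \emph{uniform in $\sigma_X^2$}. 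Hence $\int_0^T|\phi_X(t)-\phi_G(t)|\,t^{-1}\rmd t\le(1+\sigma_X^2)\|\psi\|_2\int_0^T t^{-1/2}\rme^{t^2/2}\rmd t$, and since the last integral is of order $T^{-3/2}\rme^{T^2/2}$ (in particular $\le\rme^{T^2/2}$ for $T$ large) the Esseen integral is bounded by a constant times $(1+\sigma_X^2)\sqrt{\epsilon}\,\rme^{T^2/2}$.

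Finally I would feed this into the smoothing inequality, obtaining $\Delta\le\frac{2}{\pi}C(1+\sigma_X^2)\sqrt{\epsilon}\,\rme^{T^2/2}+\frac{24}{\pi T\sqrt{2\pi\sigma_X^2}}$, and choose the window $T=\sqrt{\log(1/\epsilon)/(1+\delta)}$: then $\sqrt{\epsilon}\,\rme^{T^2/2}=\epsilon^{\delta/(2(1+\delta))}$ and $1/T=\sqrt{(1+\delta)/\log(1/\epsilon)}$, and taking the infimum over $\delta>0$ produces the stated bound (careful bookkeeping of the constants, absorbing $C$, yields the coefficient $2(1+\sigma_X^2)/\pi$ and the displayed tail term). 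I expect the main obstacle to be the third step: extracting from the \emph{averaged} control of $\psi$ a bound on $\int_0^T|\phi_X-\phi_G|/t\,\rmd t$ with the \emph{right} exponential rate in $T$. A naive Cauchy--Schwarz (bounding $|\phi_X-\phi_G|\le|w-1|$ directly) gives the rate $\rme^{(1+\sigma_X^2)T^2/2}$, which would force $\delta>\sigma_X^2$ and a $\sigma_X^2$-dependent exponent; getting instead $\rme^{T^2/2}$ --- and hence the $\sigma_X^2$-uniform rate $\epsilon^{\delta/(2(1+\delta))}$ valid for all $\delta>0$ --- requires retaining the cancellation $\rme^{s^2/2-\sigma_X^2(t^2-s^2)/2}\le\rme^{t^2/2}$ between the integrating factor and the weight, and pinning the sharp constant additionally needs precise estimates of Gaussian-type integrals such as $\int_0^T t^{-1/2}\rme^{t^2/2}\rmd t$.
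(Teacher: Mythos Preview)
Your approach is correct and structurally parallel to the paper's: both use the orthogonality principle together with Proposition~\ref{prop:orthog_fourier} to relate the hypothesis to $\tilde\phi_X$, exploit the integrating-factor identity $(\rme^{\sigma_X^2 t^2/2}\phi_X)'=(1+\sigma_X^2)\,\rme^{\sigma_X^2 t^2/2}\tilde\phi_X$ to pass to a pointwise bound on $|\phi_X-\phi_G|$, feed that into Esseen's smoothing inequality, and optimize with $T=\sqrt{\log(1/\epsilon)/(1+\delta)}$. The tactical difference lies in how the pointwise bound is obtained. The paper chooses, for each fixed $\tau$, the explicit test function $\hat g(t)=\rme^{(1+\sigma_X^2)t^2/2}\rme^{-\sigma_X^2\tau^2/2}\,1_{[0,\tau]}(t)$, so that the integral $\int\hat g\,\tilde\phi_X\,\rme^{-t^2/2}\rmd t$ is \emph{exactly} the integrating-factor integral and collapses (via Lemma~\ref{lem:diff_char}) to $\frac{1}{1+\sigma_X^2}|\phi_X(\tau)-\phi_G(\tau)|$; the crude bound $\sqrt{\E|g(Y)|^2}\le|\tau|\rme^{\tau^2/2}$ then gives $|\phi_X(\tau)-\phi_G(\tau)|\le(1+\sigma_X^2)\sqrt{\epsilon}\,|\tau|\,\rme^{\tau^2/2}$, the factor $|\tau|$ cancels the $1/|\omega|$ in Esseen, and the Dawson bound $\int_{-T}^T\rme^{\omega^2/2}\rmd\omega\le 2\rme^{T^2/2}$ produces the stated constant with no further work. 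Your route---a global duality bound $\|\psi\|_2\le(2\pi)^{1/4}\sqrt{\epsilon}$ followed by Cauchy--Schwarz on the integrating-factor integral---yields the slightly sharper power $|\phi_X(\tau)-\phi_G(\tau)|\lesssim\sqrt{\tau}\,\rme^{\tau^2/2}$, so the Esseen integral is even a bit smaller asymptotically; however, the extra $(2\pi)^{1/4}$ and the need to control $\int_0^T t^{-1/2}\rme^{t^2/2}\rmd t$ make the constant tracking less clean, and your assertion that careful bookkeeping recovers the coefficient $2(1+\sigma_X^2)/\pi$ \emph{exactly} is optimistic (the order and the exponent $\epsilon^{\delta/(2(1+\delta))}$ are certainly correct).
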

\begin{proof}
    As a first step, note that by using the Cauchy-Schwarz inequality, we have that  
\begin{align}
\sqrt{\E[ |aY  - \E[X|Y]|^2]} &\ge \frac{ | \E[  (  aY  - \E[X|Y]  )  g(Y)  ]   |}{ \sqrt{\E[ |g(Y)|^2 ]}} \\
&= \frac{ | \E[  (    X-aY  )  g(Y)  ]   |}{ \sqrt{\E[ |g(Y)|^2 ]}} \label{eq:using_orthog}\\
&=  \frac{ \left| \int_{\bbR} \hat{g}(t) \tilde{\phi}_X(t) \rme^{ - \frac{t^2}{2}} \rmd t\right|}{   \sqrt{\E[ |g(Y)|^2 ]} } ,\label{eq:f_orth_exp}
\end{align} 
where in \eqref{eq:using_orthog} we have used that 
\begin{align}
\E[  (  aY  - \E[X|Y]  )  g(Y)  ]  &= \E[  (  aY  - X )  g(Y)  ]  + \E[  (  X  - \E[X|Y]  )  g(Y)  ] =\E[  (  aY  - X )  g(Y)  ]  + 0,
\end{align} 
and the orthogonality principle in \eqref{eq:Orthogonality} which states that $\E[  (  X  - \E[X|Y]  )  g(Y)  ]  =0$; and in \eqref{eq:f_orth_exp} we have used Proposition~\ref{prop:orthog_fourier}.

Now choosing 
\begin{equation}
    \hat{g}(t) =  \rme^{\frac{ (\sigma^2_X+1) t^2 } {2}} \rme^{-\frac{\sigma^2_X \tau^2}{2}} 1_{[0, \tau] }(t) ,   \quad t,\tau \in \bbR, 
\end{equation}
in \eqref{eq:f_orth_exp}, we arrive at
\begin{align}
\sqrt{\epsilon}  \cdot  \sqrt{\E[ |g(Y)|^2 ]}  & \ge  \frac{1}{1+\sigma_X^2} \rme^{-\frac{\sigma^2_X \tau^2}{2}}   \left|   \int_0^{\tau}  \rme^{\frac{\sigma^2_X t^2 }{2}} \left(  \phi_{X}^{\prime}(t )   + \sigma^2_X t   \phi_{X}(t )    \right) \rmd t \right|\\
&= \frac{1}{1+\sigma_X^2}  \left| \phi_{X}(\tau) -\rme^{-\frac{\sigma^2_X \tau^2}{2}} \right| \label{eq:bound_diff_char}
\end{align}
where \eqref{eq:bound_diff_char} follows from Lemma~\ref{lem:diff_char} (below).  

Additionally note that 
\begin{align}
   \sqrt{ \E[ |g(Y)|^2 ] }&= \sqrt{\E \left[ | \int_\bbR \hat{g}(t) \rme^{it Y} \rmd t|^2 \right] }\\
    & \le  |\tau|  \rme^{\frac{ (\sigma^2_X+1) \tau^2 } {2}} \rme^{-\frac{\sigma^2_X \tau^2}{2}}  = |\tau|  \rme^{\frac{ \tau^2}{2}} \label{eq:bound_on_moments_g}.
\end{align}
% {\color{blue} Leighton's attempt at doing better here (assuming $\sigma_X>1$):

% \begin{align}
%    \sqrt{ \E[ |g(Y)|^2 ] }&= \sqrt{\E \left[ | \int \hat{g}(t) \rme^{it Y} \rmd t|^2 \right] }\\
%     & \le \sqrt{\E \left[ \left( \int |\hat{g}(t) | \rmd t \right)^2\right] }\\
%     & = \int |\hat{g}(t) | \rmd t \\
%     & = \rme^{-\frac{\sigma^2_X \tau^2}{2}} \int_0^\tau \rme^{\frac{ (\sigma^2_X-1) t^2 } {2}}  \rmd t \\
%     & \leq \rme^{-\frac{\sigma^2_X \tau^2}{2}}\rme^{\frac{ (\sigma^2_X-1) \tau^2 } {2}}\left(\sqrt{\frac{2}{\sigma_X^2-1}}\right)D\left(\tau\sqrt{\frac{\sigma_X^2-1}{2}}\right) \\
%     & \leq \left(\sqrt{\frac{2}{\sigma_X^2-1}}\right)\rme^{-\frac{\tau^2}{2}}
% \end{align}

% }

Now that we have established proximity bounds for the characteristic function, we can establish proximity bounds for the CDFs. To that end, recall Esseen's inequality \cite{feller1971introduction}: Let $X$ and $Y$ be two random variables with CDFs $F_X$ and $F_Y$, respectively, and assume that $Y$ has a PDF, then for any $T>0$
\begin{equation}
\sup_{t \in \bbR} | F_X(t) - F_Y(t) | \le \frac{1}{\pi} \int_{-T}^T  \left| \frac{\phi_X(\omega) -\phi_Y(\omega)}{\omega} \right|  \rmd \omega  + \frac{ 24 \| f_Y\|_\infty }{\pi T} .\label{eq:Esseen_Ineq}\end{equation} 

Therefore, combining \eqref{eq:bound_diff_char}, \eqref{eq:bound_on_moments_g} and \eqref{eq:Esseen_Ineq}, we arrive at 
\begin{align}
\sup_{t \in \bbR} | F_X(t) - \Phi_{\sigma_X}(t) | & \le  \frac{ \sqrt{\epsilon} (1+\sigma_X^2) }{\pi } \int_{-T}^T \rme^{ \frac{\omega ^2}{2}}  \rmd \omega  +\frac{24}{\pi T \sqrt{2\pi\sigma_X^2}}\\
%& \le  \frac{\sqrt{\epsilon} (1+\sigma_X^2) }{\pi } 2 T \rme^{ \frac{T ^2}{2}}   +\frac{24}{T} \\
 & \leq \frac{\sqrt{\epsilon} (1+\sigma_X^2) }{\pi } 2 \rme^{ \frac{T ^2}{2}}   +\frac{24}{\pi T \sqrt{2\pi\sigma_X^2}} \; .\label{eq:dawson_step}
\end{align}
 In the step \eqref{eq:dawson_step} above, we use the boundedness of the Dawson integral. Choosing $T =\sqrt{\frac{1}{1+\delta} \log \frac{1}{\epsilon}}$ gives
 \begin{align}
\sup_{t \in \bbR} | F_X(t) - \Phi_{\sigma_X}(t) | & \le  \frac{2 (1+\sigma_X^2) }{\pi } \epsilon^{\frac{\delta}{2(1+\delta)}}    +\frac{24}{\pi\sqrt{\frac{2\pi\sigma_X^2}{1+\delta} \log \frac{1}{\epsilon}}}
\end{align}
as desired. Minimizing over $\delta$ concludes the proof. 
\end{proof}

\subsection{Stability in the $L^1$ Case}

Our main result on stability for the optimal $L^1$ estimator is the following. Note that by taking $\varepsilon=0$, the stability result recovers our previous uniqueness result from \cite{L1EstimationJournal} (except assuming that $X$ has a bounded square integrable density $f$), but does so in a purely functional manner, relying only on the $L^2$ Hermite expansion and the properties of the adjoint operator $T^*_a$ instead of any results from the theory of tempered distributions.
\begin{thm} \label{thm:L1case}
Let $\psi(y) = \mathsf{med}(X|Y=y)$  and assume that the density of $X$ satisfies $f(x)\leq M$ for all $x$. For any $\varepsilon\geq 0$, if
\begin{align} \label{assumption1}
    \int_{-\infty}^\infty \exp\left(\frac{1-a}{2}y^2-\min_{x\in\left[ay-\sqrt{2a\varepsilon},ay+\sqrt{2a\varepsilon}\right]}\frac{(x-y)^2}{2}\right)\big|ay - \psi(y)\big|\rmd y \leq B < \infty 
\end{align}
and
\begin{align} \label{eq:assumption2}
\left|\psi^{-1}(x) - \frac{x}{a}\right| \leq \Delta(x) \; , \quad \sum_i b_i \leq \varepsilon \; ,
\end{align}
where $\Delta(x) = b_i$ when $x\in[i,i+1)$,
then
\begin{equation} |\langle f,H_n \rangle| \leq L_n\varepsilon,
\end{equation}
for $n\geq 1$ and $L_n = \mathcal{O}(n^\frac{1}{4})$.
\end{thm}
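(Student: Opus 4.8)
The plan is to expand the hypothesis $T_a[f](y) = 0$ into a perturbed version that accounts for the gap between $\psi(y) = \mathsf{med}(X|Y=y)$ and the target linear function $ay$, then pair against the pre-images $\varphi_n$ supplied by Proposition~\ref{prop:Hermite_output}. Recall that $\mathsf{med}(X|Y=y) = \psi(y)$ means $\int_{-\infty}^{\psi(y)} f(x)\, \mathrm e^{-(x-y)^2/2}\, \rmd x = \int_{\psi(y)}^{\infty} f(x)\, \mathrm e^{-(x-y)^2/2}\, \rmd x$, i.e. $\int_{\mathbb R} f(x)\,\mathsf{sign}(x-\psi(y))\, \mathrm e^{-(x-y)^2/2}\, \rmd x = 0$. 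So the "error" term in the definition of $T_a[f](y)$ is entirely the discrepancy between $\mathsf{sign}(x-ay)$ and $\mathsf{sign}(x-\psi(y))$, which is supported on the interval between $ay$ and $\psi(y)$ and equals $\pm 2$ there. Thus
\begin{align}
|T_a[f](y)| \;=\; \left| \int_{-\infty}^\infty f(x)\big(\mathsf{sign}(x-ay) - \mathsf{sign}(x-\psi(y))\big)\mathrm e^{-(x-y)^2/2}\, \rmd x \right| \;\leq\; 2M \left| \int_{\min(ay,\psi(y))}^{\max(ay,\psi(y))} \mathrm e^{-(x-y)^2/2}\, \rmd x \right|,
\end{align}
using $f\leq M$. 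The factor $\mathrm e^{\frac{1-a}{2}y^2 - \min_{x}\frac{(x-y)^2}{2}}$ appearing in \eqref{assumption1} is exactly what one needs to control this integrand after the change of variables built into the definition of $\varphi_n$ (the $\mathrm e^{-(1-a)y^2/2}$ weight that appears in Proposition~\ref{prop:Hermite_output}); the role of \eqref{eq:assumption2} is to certify that $|ay - \psi(y)|$ is small in the relevant weighted $L^1$ sense, with total budget $\varepsilon$, and that $\mathsf{med}$ is invertible so that the substitution $x = \psi(y)$, $y = \psi^{-1}(x)$ is legitimate — quantifying $\big|\psi^{-1}(x) - x/a\big|\leq \Delta(x)$ controls the integration domain $[\min(ay,\psi(y)),\max(ay,\psi(y))]$, which has length $\leq |ay - \psi(y)|$ but is more naturally bounded through the inverse.

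The second step is the pairing. By Proposition~\ref{prop:Hermite_output}, $T_a^*[\varphi_n] = H_n$, so formally
\begin{align}
\langle f, H_n \rangle \;=\; \langle f, T_a^*[\varphi_n]\rangle \;=\; \langle T_a[f], \varphi_n\rangle \;=\; \int_{-\infty}^\infty T_a[f](y)\, \varphi_n(y)\, \rmd y.
\end{align}
Writing $\varphi_n(y) = \mathrm e^{(1-a)y^2/2}\cdot \mathrm e^{-(1-a)y^2/2}\varphi_n(y)$ and bounding via Hölder, I would estimate
\begin{align}
|\langle f, H_n\rangle| \;\leq\; \Big\| \mathrm e^{(1-a)y^2/2} T_a[f] \Big\|_1 \cdot \Big\| \mathrm e^{-(1-a)y^2/2}\varphi_n \Big\|_\infty \;\leq\; \Big\| \mathrm e^{(1-a)y^2/2} T_a[f] \Big\|_1 \cdot \Big\| \mathcal F\big[\mathrm e^{-(1-a)y^2/2}\varphi_n\big] \Big\|_1,
\end{align}
where the last inequality is the standard $\|g\|_\infty \leq \|\mathcal F[g]\|_1$ bound (with the paper's Fourier normalization). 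Proposition~\ref{prop:Hermite_output}(ii) gives the second factor as $\mathcal O(n^{1/4})$. The first factor is, by the Step-1 estimate, at most $2M$ times $\int \mathrm e^{(1-a)y^2/2}\big|\int_{\min(ay,\psi(y))}^{\max(ay,\psi(y))}\mathrm e^{-(x-y)^2/2}\rmd x\big|\rmd y$, which after bounding the inner integral by its length times the maximum of the Gaussian on the interval, $\leq |ay-\psi(y)|\cdot \mathrm e^{-\min_x (x-y)^2/2}$, is controlled by the left-hand side of \eqref{assumption1} up to the constant $B$; the remaining task is to extract the linear-in-$\varepsilon$ dependence, which is where \eqref{eq:assumption2} enters: the budget $\sum_i b_i \leq \varepsilon$ forces the weighted mass of $|ay - \psi(y)|$ to be $O(\varepsilon)$, with the finite constant $B$ absorbing the part of the weight that does not shrink. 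Combining, $|\langle f, H_n\rangle| \leq L_n \varepsilon$ with $L_n = \mathcal O(n^{1/4})$.

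The step I expect to be the main obstacle is making the pairing $\langle f, T_a^*[\varphi_n]\rangle = \langle T_a[f], \varphi_n\rangle$ rigorous, since $\varphi_n$ is only known to satisfy the weighted integrability in Proposition~\ref{prop:Hermite_output}(i), not to lie in $L^2$ outright, and $T_a[f]$ need not be in $L^2$ either — so the adjoint relation has to be justified by Fubini on the explicit kernel rather than by abstract Hilbert-space duality. Concretely, one must check that $\int\!\!\int |f(x)|\,|\mathsf{sign}(x-ay)|\,\mathrm e^{-(x-y)^2/2}\,|\varphi_n(y)|\,\rmd x\,\rmd y < \infty$, which again reduces to the weighted bound on $\varphi_n$ together with $f\leq M$ and the Gaussian decay — so assumption \eqref{assumption1} is doing double duty here. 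A secondary technical point is the precise bookkeeping that turns $\sum_i b_i \leq \varepsilon$ and the piecewise-constant $\Delta$ into a clean $O(\varepsilon)$ bound on the weighted integral of $|ay-\psi(y)|$; I would handle this by changing variables to $x$ via $y = \psi^{-1}(x)$, splitting the $x$-axis into the unit intervals $[i,i+1)$, and using $|\psi^{-1}(x) - x/a|\leq b_i$ on each, so that the contribution of interval $i$ is bounded by $b_i$ times a uniformly-bounded weight integral, summing to $\leq B\varepsilon$ after renaming the constant.
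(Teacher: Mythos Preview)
Your high-level strategy is exactly the paper's: write $T_a[f](y)$ as the ``sign discrepancy'' integral supported between $ay$ and $\psi(y)$, pair against $\varphi_n$ via $T_a^*[\varphi_n]=H_n$, split $\varphi_n(y)=e^{(1-a)y^2/2}\cdot e^{-(1-a)y^2/2}\varphi_n(y)$, apply the $L^1$--$L^\infty$ H\"older bound, and use $\|\cdot\|_\infty\le\|\mathcal F[\cdot]\|_1$ together with Proposition~\ref{prop:Hermite_output}(ii). That part is fine.

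The gap is in extracting the $O(\varepsilon)$ factor from $\big\|e^{(1-a)y^2/2}T_a[f]\big\|_1$. Once you invoke $f\le M$ pointwise, your bound becomes
\[
\big|e^{(1-a)y^2/2}T_a[f](y)\big|\;\le\;2M\,e^{(1-a)y^2/2-\min_x(x-y)^2/2}\,|ay-\psi(y)|,
\]
and the weight on the right is $\approx e^{a(1-a)y^2/2}$ (take $x\approx ay$). After your change of variables to $x$ and the split into unit intervals $[i,i+1)$, the ``weight integral'' over the $i$-th interval is of order $e^{(1-a)i^2/(2a)}$, which is \emph{not} uniformly bounded in $i$. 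So $\sum_i b_i\le\varepsilon$ cannot give $O(\varepsilon)$ here; at best it gives $\sum_i b_i\,e^{(1-a)i^2/(2a)}$, which need not even be finite. Assumption~\eqref{assumption1} gives you only the constant $2MB$, not something that shrinks with $\varepsilon$.

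The paper avoids this by \emph{not} throwing away $f$ for $M$ at that stage. Setting $\widetilde f(x)=e^{(1-a)x^2/(2a)}f(x)$, one has $e^{(1-a)y^2/2}T_a[f](y)=\int \widetilde f(ax)\,\mathsf{sign}(x-y)\,e^{-a(x-y)^2/2}\rmd x$, so the growing weight is absorbed into $\widetilde f$ itself. The missing idea is a two-pass argument: first use assumption~\eqref{assumption1} (and $f\le M$) only to deduce the \emph{growth estimate} $\sup_z\int_{z-1/2}^{z+1/2}\widetilde f(ax)\,\rmd x\le C_0$, via an antiderivative-of-the-kernel trick that relies on $f\ge 0$. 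Then, in the second pass, bound $\big|e^{(1-a)y^2/2}T_a[f](y)\big|\le 2\big|\int_{ay}^{\psi(y)}\widetilde f(ax)\,\rmd x\big|$, swap the order of integration, and obtain $\int\big|e^{(1-a)y^2/2}T_a[f]\big|\le 2\int|\psi^{-1}(x)-x/a|\,\widetilde f(ax)\,\rmd x\le 2C_0\sum_i b_i\le 2C_0\varepsilon$. The same growth estimate is also what makes the Fubini justification of $\langle T_a f,\varphi_n\rangle=\langle f,T_a^*\varphi_n\rangle$ go through (this is the content of the paper's Proposition~\ref{prop:absolute_integrability}), so it is needed twice.
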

Before getting to the proof, let us consider the assumptions of the theorem. The condition \eqref{assumption1} assumes that $\psi(y)$ and $ay$ differ only by a function that decays like a Gaussian. In particular, if $\varepsilon$ is small, the difference needs to decay like $\exp\left(-\frac{a(1-a)}{2}y^2\right)$. Fortunately, it can be checked that this assumption is indeed satisfied by any density $f(x)$ that is given by $\mathcal{N}\left(0,\frac{a}{1-a}\right)+\xi(x)$ where $\xi$ is a compactly supported perturbation, and the set of such densities is dense in the ambient space of all square-integrable $f(x)$. The condition \eqref{eq:assumption2} is a slightly stronger and more uniform version of $\psi(y)$ and $ay$ being close in $L^1$ distance. Note that
\begin{align}
\int_{-\infty}^\infty |\psi(y) - ay|\rmd y & = \int_{-\infty}^\infty \big|\psi^{-1}(x) - \frac{x}{a}\big| \rmd x \\
& \leq \int_{-\infty}^\infty \Delta(x)\rmd x = \sum_i b_i \leq \varepsilon \; .
\end{align}
The punchline of the theorem, that $|\langle f,H_n \rangle|$ is small for $n \geq 1$, means that $f$ must be approximately orthogonal to every Hermite function in the complete orthonormal sequence other than the Gaussian $H_0$, and in this sense $f$ must be close to a Gaussian. In the corollary that follows, we will show how this can imply $f$ is close to the density $\mathcal{N}\left(0,\frac{a}{1-a}\right)$ in $L^2$ distance provided that the Hermite coefficients of $f$ decay fast enough.

The assumptions in Theorem \ref{thm:L1case} are much stronger than those in Theorem \ref{thm:L2case}. In particular, we assume the optimal estimator is pointwise very close to a linear function, instead of just close on average. However, the resulting stability is also stronger (even though it is in the style of ``weak'' convergence) in the sense that it is functional stability of the density $f$, instead of a statement about CDFs. Note that the techniques used in Theorem \ref{thm:L2case} have no hope of working unmodified in the $L^1$ case, since they do not, in a fundamental way, use the non-negativity of $f$, and stability fails when $f$ can be negative. This necessitated the new assumptions and techniques in the following proof.

\begin{proof}
Consider the orthonormal Hermite function expansion
\begin{equation}
f(x) = \sum_{n=0}^\infty c_iH_i(x) \; .
\end{equation}
%Our overarching strategy will be to show that under the assumptions of the theorem, all $c_i$ coefficients must be small except for $c_0$ which corresponds to the Gaussian $H_0(x)$.
Our goal is to show that under the assumptions of the theorem, all $c_i$ coefficients must be small except for $c_0$ which corresponds to the Gaussian $H_0(x)$.
%The claim is that for any sequence of priors $f_m$ such that $\int |ay - \mathsf{med}(X|Y=y)|^2dy \to 0$ as $m\to\infty$, we have $c_n \to 0$ for all $n \geq 1$. If this is true, then by Parseval's theorem and dominated convergence,
%\begin{align}
%\sum_{n=1}^\infty |c_n|^2 \to 0
%\end{align}
%and $f_m \to cH_0$ in $L^2$ norm for some constant $c$. To show the claim that $c_n \to 0$ as $m\to\infty$ pointwise for all $n\geq 1$, we'll need the following sequence of implications.

Defining $\widetilde{f}(x) = e^{\frac{x^2(1-a)}{2a}}f(x)$, we will first show that
\begin{align}
\int_{z-0.5}^{z+0.5}\widetilde{f}(ax)dx \leq C_0
\end{align}
for any $z$ and some constant $C_0$. To this end, note that according to Proposition~\ref{prop:orthog_equations} the conditional median $\psi(y)$ satisfies 
\begin{equation} \label{eq:cond_med_eq}
\int_{-\infty}^\infty f(x)\mathsf{sign}(x-\psi(y))\exp\left(-\frac{(x-y)^2}{2}\right) \rmd x = 0
\end{equation}
by definition. Then using the above display, \eqref{eq:bd} can be rewritten as
\begin{align}
T_a[f](y) & = \int_{-\infty}^\infty f(x)\mathsf{sign}(x-ay)\exp\left(-\frac{(x-y)^2}{2}\right) \rmd x \\
& = \int_{-\infty}^\infty f(x)\big[\mathsf{sign}(x-ay)-\mathsf{sign}(x-\psi(y))\big]\exp\left(-\frac{(x-y)^2}{2}\right) \rmd x \; \label{eq:bd_op_plus}
\end{align}
Under the assumptions of the theorem,
\begin{equation}
    \int_{-\infty}^\infty |ay-\psi(y)| \rmd y \leq \varepsilon \; ,
\end{equation}
and since $\psi$ is nondecreasing, as shown in Lemma~\ref{lem:median_decreasing}, this implies
\begin{equation}
     \label{eq:uniform_diff_bound}
    \sup_y |ay-\psi(y)| \leq \sqrt{2a\varepsilon} \; .
\end{equation}
Using \eqref{eq:bd_op_plus}, \eqref{eq:uniform_diff_bound}, and $f(x)\leq M$, 
\begin{align} \label{eq:op_bound}
\big|T_a[f](y)\big| & \leq 2M\big|ay - \psi(y)\big|\exp\left(-\min_{x\in\left[ay-\sqrt{2a\varepsilon},ay+\sqrt{2a\varepsilon}\right]}\frac{(x-y)^2}{2}\right) \; .
\end{align}

The integral equation in \eqref{eq:bd} can be massaged into a convolution operator as was done in \cite[Prop.~5]{L1EstimationJournal},
\begin{align} 
e^\frac{(1-a)y^2}{2}T_a[f](y) & = \int_{-\infty}^\infty \widetilde{f}(ax)\mathsf{sign}(x-y)\exp\left(-\frac{a(x-y)^2}{2}\right) \rmd x \label{eq:conv_1}\\
& = \int_{-\infty}^\infty \widetilde{f}(ax)g'(y-x) \rmd x \label{eq:conv}
\end{align}
for a kernel $g'$ with an antiderivative $g$. Following \eqref{eq:conv} and using dominated convergence to pull the derivative outside of the integral (and then integrating),
\begin{equation}
\int_{-\infty}^\infty \widetilde{f}(ax)g(y-x)dx \leq C_1 + \int_{-\infty}^\infty e^\frac{(1-a)y^2}{2}\big|T_a[f](y)\big| \rmd y \; .
\end{equation}
Then using \eqref{eq:op_bound} along with the assumption \eqref{assumption1},
\begin{equation}
\int_{-\infty}^\infty \widetilde{f}(ax)g(y-x)dx \leq C_2 \; .
\end{equation}
%In particular, since $f$ is a probability density and $g$ is everywhere positive,
%\begin{align}
%\int \widetilde{f}(ax)g(y-x)dx = C_3 > 0 \; .
%\end{align}
Using the trick from \cite{L1EstimationJournal} that
\begin{equation}
g(x) \geq c1_{[-0.5,0.5]}(x)
\end{equation}
where $c>0$ and $1_{[-0.5,0.5]}$ is the indicator function for the interval $[-0.5,0.5]$, we have
\begin{equation} \label{eq:growth_estimate}
\int_{y-0.5}^{y+0.5} \widetilde{f}(ax) \rmd x \leq C_0
\end{equation}
for all $y$ as desired.

Next, we use the growth estimate from \eqref{eq:growth_estimate} to show that
\begin{equation}
\int_{-\infty}^\infty \big|e^{\frac{(1-a)y^2}{2}}T_af(y)\big| \rmd y \leq C_3\varepsilon \; .
\end{equation}
For this, following \eqref{eq:conv_1} and using \eqref{eq:cond_med_eq},
\begin{align}
e^\frac{(1-a)y^2}{2}T_a[f](y) & = \int_{-\infty}^\infty \widetilde{f}(ax)\left[\mathsf{sign}(ax-ay)-\mathsf{sign}(ax-\psi(y))\right]\exp\left(-\frac{a(x-y)^2}{2}\right) \rmd x
\end{align}
so that
\begin{align}
\big|e^\frac{(1-a)y^2}{2}T_a[f](y)\big| \leq 2\left|\int_{ay}^{\psi(y)}\widetilde{f}(ax) \rmd x\right|
\end{align}
and
\begin{align}
\int_{-\infty}^\infty \big|e^\frac{(1-a)y^2}{2}T_a[f](y)\big| \rmd y \leq 2\int_{-\infty}^\infty \left|\int_{ay}^{\psi(y)}\widetilde{f}(ax) \rmd x\right| \rmd y \; . \label{eq:L1_int}
\end{align}
We can rewrite \eqref{eq:L1_int} as
\begin{align}
\int_{-\infty}^\infty \big|e^\frac{(1-a)y^2}{2}T_a[f](y)\big|\rmd y & \leq 2  \int_{-\infty}^\infty  \left|\int_{\frac{x}{a}}^{\psi^{-1}(x)} \rmd y\right|\widetilde{f}(ax) \rmd x \\
& = 2\int_{-\infty}^\infty \left|\psi^{-1}(x) - \frac{x}{a}\right|\widetilde{f}(ax)\rmd x \\
& \leq 2\int_{-\infty}^\infty \Delta(x)\widetilde{f}(ax) \rmd x \label{eq:penultimate_L1_bound}\\
& \leq 2 C_0 \sum_{i} b_i \label{eq:final_L1_bound} \; .
\end{align}
The inequality \eqref{eq:penultimate_L1_bound} uses the assumption \eqref{eq:assumption2}, and the final inequality \eqref{eq:final_L1_bound} uses the growth estimate \eqref{eq:growth_estimate}. We have therefore shown that the assumptions
\begin{equation}
\left|\psi^{-1}(x) - \frac{x}{a}\right| \leq \Delta(x) \; , \quad \sum_i b_i \leq \varepsilon \; ,
\end{equation}
where $\Delta(x) = b_i$ when $x\in[i,i+1)$, imply
\begin{equation} \label{eq:step2}
\int_{-\infty}^\infty \big|e^\frac{(1-a)y^2}{2}T_a[f](y)\big| \rmd y \leq C_3\varepsilon \; .
\end{equation}

In Proposition \ref{prop:Hermite_output} in Appendix~\ref{sec:Aux_res}, we show how to construct functions $\varphi_n$ for $n\geq 1$ such that $\varphi_n$ is mapped by the adjoint of $T_a$ to $H_n$. Letting $\phi_0(y) = e^{-\frac{(1-a)y^2}{2}}$, these $\varphi_n$ have the nice property that $\|\phi_0\varphi_n\|_2$ are finite. We have:
\begin{align}
\big| \langle T_af, \varphi_n \rangle \big| = \big|\langle \phi_0^{-1}T_af,\phi_0\varphi_n \rangle \big| \leq C_3\varepsilon\|\phi_0\varphi_n\|_\infty \leq C_3\varepsilon\|\mathcal{F}[\phi_0\varphi_n]\|_1
\end{align}
for $n\geq 1$. Furthermore, Proposition \ref{prop:Hermite_output} also shows that
\begin{equation}
\|\mathcal{F}[\phi_0\varphi_n]\|_1 =\mathcal{O}(n^\frac{1}{4})
\end{equation}
for some $d<1$. Using Proposition \ref{prop:absolute_integrability} below to justify the interchange of limits,
\begin{equation}
\langle T_af,\varphi_n\rangle = \langle f, T_a^*\varphi_n\rangle = \langle f, H_n\rangle = c_n \; .
\end{equation}
Putting this all together,
\begin{equation}
|c_n| = |\langle f,H_n \rangle| \leq L_n \varepsilon
\end{equation}
for $n\geq 1$ where $L_n = \mathcal{O}(n^\frac{1}{4})$.

\end{proof}

\begin{cor}
Under the assumptions of Theorem \ref{thm:L1case} and $f\in L^2(\mathbb{R})$, suppose that the Hermite coefficients $c_n$ additionally satisfy
\begin{equation}
\sum_{n=0}^\infty |c_n|n^\frac{1}{4} < \infty \; .
\end{equation}
Then,
\begin{equation}
\left\|f-c_0H_0\right\|_2^2 \leq C\varepsilon \; .
\end{equation}
\end{cor}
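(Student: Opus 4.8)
The plan is to combine the coefficientwise estimate supplied by Theorem~\ref{thm:L1case} with the summability hypothesis through Parseval's identity. Since $f\in L^2(\mathbb{R})$ and $\{H_n\}_{n\ge 0}$ is a complete orthonormal system for $L^2(\mathbb{R})$, Parseval gives
\begin{equation}
\|f-c_0H_0\|_2^2 = \sum_{n=1}^\infty |c_n|^2 ,
\end{equation}
and this series already converges because $\sum_{n\ge 0}|c_n|^2 = \|f\|_2^2 < \infty$; the extra hypothesis $\sum_{n\ge 1}|c_n|n^{1/4}<\infty$ is precisely what upgrades this crude bound to an $\mathcal{O}(\varepsilon)$ estimate.

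First I would record that the conclusion of Theorem~\ref{thm:L1case}, namely $|c_n| = |\langle f, H_n\rangle| \le L_n\varepsilon$ with $L_n = \mathcal{O}(n^{1/4})$, furnishes a single constant $C'$ (depending only on $a$ and $M$) with $L_n \le C' n^{1/4}$ for \emph{every} $n\ge 1$: the big-$\mathcal{O}$ statement gives this for all sufficiently large $n$, and the finitely many remaining indices are absorbed into $C'$ because each $L_n$ is finite. Then I would split each squared coefficient into a product, keeping one factor intact and bounding the other by the theorem:
\begin{equation}
\sum_{n=1}^\infty |c_n|^2 \;=\; \sum_{n=1}^\infty |c_n|\cdot|c_n| \;\le\; \varepsilon\sum_{n=1}^\infty |c_n|\,L_n \;\le\; C'\varepsilon\sum_{n=1}^\infty |c_n|\,n^{1/4} .
\end{equation}
By hypothesis the last series is a finite number, so setting $C = C'\sum_{n\ge 1}|c_n|n^{1/4}$ yields $\|f-c_0H_0\|_2^2 \le C\varepsilon$, as claimed.

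There is essentially no obstacle here: the argument is an elementary two-line estimate once Theorem~\ref{thm:L1case} is available, and the ``hard work'' has already been done in proving that theorem and Proposition~\ref{prop:Hermite_output}. The only points that warrant a word of care are that the hypotheses of Theorem~\ref{thm:L1case} are assumed verbatim, so its conclusion holds for all $n\ge 1$ simultaneously; that $f\in L^2(\mathbb{R})$ is needed both for the Hermite expansion to converge in $L^2$ and for Parseval; and that $H_0$ is (up to the scalar $c_0 = \langle f,H_0\rangle$) exactly the Gaussian density $\mathcal{N}\!\left(0,\tfrac{a}{1-a}\right)$, so that $\|f-c_0H_0\|_2$ genuinely measures the $L^2$ distance of $f$ from the Gaussian. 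If one additionally invokes that $f$ is a probability density, the value of $c_0$ is pinned down and the conclusion becomes closeness to a fixed Gaussian; I would add that remark only if space permits.
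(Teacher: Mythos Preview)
Your proof is correct and is essentially identical to the paper's own argument: the paper writes the single chain $\|f-c_0H_0\|_2^2=\sum_{n\ge1}c_n^2\le\sum_{n\ge1}|c_n|L_n\varepsilon\le C\varepsilon$, which is exactly your split of $|c_n|^2=|c_n|\cdot|c_n|$ followed by $|c_n|\le L_n\varepsilon$ and $L_n=\mathcal{O}(n^{1/4})$. Your additional remarks on Parseval, the choice of $C'$, and the interpretation of $c_0H_0$ are fine elaborations but add nothing beyond the paper's two-line proof.
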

\begin{proof}
We have
\begin{equation}
\left\|f(x) - c_0H_0\right\|_2^2 = \sum_{n=1}^\infty c_n^2 \leq \sum_{n=1}^\infty |c_n|L_n\varepsilon \leq C\varepsilon \; .
\end{equation}
%Using Cram\'er's inequality \cite{cramers}, there is a bound on $\int H_n(x)dx$ that is uniform in $n$. Therefore
%\begin{align}
%    1 & = \int f(x)dx \\
%    & = \int \sum_{n} c_nH_n(x) \\
%    & \leq c_0\int H_0(x) + C\varepsilon
%\end{align}
\end{proof}
Fast decay of Hermite coefficients occurs for many functions of interest, and in particular it is known that the coefficients $c_n$ will decay exponentially if and only if both $f$ and its Fourier transform have Gaussian decay (with any variance) \cite{hermites}.

\section{Conclusion} \label{sec:conclusion}
This work establishes a functional framework that links the approximate linearity of Bayesian estimators to the near-Gaussianity of the underlying prior. For the $L^2$ loss, we obtained an explicit quantitative rate in the L\'evy metric, while for the $L^1$ loss we introduced a Hermite--analytic method establishing that the Gaussian the unique stable solution.   Several refinements remain open, particularly for the $L^1$ case such as relaxing the technical decay assumptions. For the $L^2$ case it would also be interesting to find if the current order in terms of $\epsilon$ is tight.

\bibliography{refs.bib}
\bibliographystyle{IEEEtran}

\begin{appendices}
\section{Auxiliary Results}
\label{sec:Aux_res}

\begin{lem}\label{lem:diff_char} For $\tau \in \bbR$
\begin{align*}
\left| \phi_{X}(\tau) -\rme^{-\frac{\sigma^2_X \tau^2}{2}} \right|
= \rme^{-\frac{\sigma^2_X \tau^2}{2}}   \left|   \int_0^{\tau}  \rme^{\frac{\sigma^2_X t^2 }{2}} \left(  \phi_{X}^{\prime}(t )   + \sigma^2_X t   \phi_{X}(t )    \right) \rmd t \right|
\end{align*} 
\end{lem}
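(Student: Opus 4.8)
The plan is to recognize the expression inside the integral as an exact derivative. Define the auxiliary function $h(t) = \rme^{\frac{\sigma_X^2 t^2}{2}}\phi_X(t)$ on $\bbR$. Since $\sigma_X^2 < \infty$, the random variable $X$ has a finite second moment, so $\phi_X$ is continuously differentiable with $\phi_X'(t) = \E[iX\rme^{itX}]$, and hence $h$ is $C^1$ on all of $\bbR$. A direct computation gives
\begin{align}
h'(t) = \rme^{\frac{\sigma_X^2 t^2}{2}}\left(\phi_X'(t) + \sigma_X^2 t\, \phi_X(t)\right).
\end{align}

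Next I would apply the fundamental theorem of calculus to $h$ on the interval between $0$ and $\tau$ (the direction of integration does not matter since both sides of the claimed identity are inside absolute values). This yields
\begin{align}
h(\tau) - h(0) = \int_0^\tau \rme^{\frac{\sigma_X^2 t^2}{2}}\left(\phi_X'(t) + \sigma_X^2 t\, \phi_X(t)\right)\rmd t.
\end{align}
Now observe $h(0) = \phi_X(0) = 1$, and multiply both sides by $\rme^{-\frac{\sigma_X^2\tau^2}{2}}$. On the left we get $\rme^{-\frac{\sigma_X^2\tau^2}{2}}h(\tau) - \rme^{-\frac{\sigma_X^2\tau^2}{2}} = \phi_X(\tau) - \rme^{-\frac{\sigma_X^2\tau^2}{2}}$, which is exactly the quantity whose modulus appears on the left side of the lemma. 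Taking absolute values of both sides then gives the stated identity.

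There is essentially no obstacle here; the only point requiring a word of care is the regularity of $\phi_X$, which is precisely where the finite-variance hypothesis $\sigma_X^2 < \infty$ enters (ensuring $\phi_X \in C^1(\bbR)$ so that $h$ is absolutely continuous on bounded intervals and the fundamental theorem of calculus applies). Everything else is a one-line computation, which is why this is placed in the appendix as an auxiliary lemma supporting the step \eqref{eq:bound_diff_char} in the proof of Theorem~\ref{thm:L2case}.
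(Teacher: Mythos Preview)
Your proof is correct and follows essentially the same approach as the paper: both arguments differentiate the product $\rme^{\frac{\sigma_X^2 t^2}{2}}\phi_X(t)$, apply the fundamental theorem of calculus on $[0,\tau]$, use $\phi_X(0)=1$, and then multiply through by $\rme^{-\frac{\sigma_X^2 \tau^2}{2}}$ before taking absolute values. Your additional remark about the $C^1$ regularity of $\phi_X$ (via $\sigma_X^2<\infty$) is a helpful justification that the paper leaves implicit.
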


\begin{proof}
Next  we use  the following  observations
\begin{align}
\phi_{X}(\tau)  \rme^{\frac{\sigma^2_X \tau^2}{2}} -1 &= \int_0^\tau   \phi_{X}^{\prime}(u)   \rme^{\frac{\sigma^2_X u^2}{2}} +  \phi_{X}(u)  \rme^{\frac{\sigma^2_X u^2}{2} } \sigma^2_X u  \, \rmd u \label{eq:usingFTC} \\
&=\int_0^\tau   \rme^{\frac{\sigma^2_X u^2}{2}} \left(  \phi_{X}^{\prime}(u)   +  \phi_{X}(u)   \sigma^2_X u \right) \rmd u, \label{eq:ExpressionAfterFTC}
\end{align}
where the expression in \eqref{eq:usingFTC} observing that $\frac{\rmd }{ \rmd \tau} \phi_{X}(\tau)  \rme^{\frac{\sigma^2_X \tau^2}{2} }=   \phi_{X}^{\prime}(\tau)  \rme^{\frac{\sigma^2_X \tau^2}{2}}+ \phi_{X}(\tau)  \rme^{\frac{\sigma^2_X \tau^2}{2}} \sigma^2_X \tau$ and using fundamental theorem of calculus  $f(b)-f(a)=\int_a^b f^{\prime}(u)  \rmd u$.

Next, by using  \eqref{eq:ExpressionAfterFTC}, we have
\begin{align*}
\left| \phi_{X}(\tau) -\rme^{-\frac{\sigma^2_X \tau^2}{2}} \right|
= \rme^{-\frac{\sigma^2_X \tau^2}{2}}   \left|   \int_0^{\tau}  \rme^{\frac{\sigma^2_X t^2 }{2}} \left(  \phi_{X}^{\prime}(t )   + \sigma^2_X t   \phi_{X}(t )    \right) \rmd t \right| . 
\end{align*} 
\end{proof} 

\begin{prop} \label{prop:absolute_integrability}
For $\varphi_n$ defined as in Proposition \ref{prop:Hermite_output}, and under the assumptions of Theorem \ref{thm:L1case},
\begin{equation}
\langle T_af,\varphi_n\rangle = \langle f, T_a^*\varphi_n\rangle
\end{equation}
\end{prop}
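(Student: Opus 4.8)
The assertion is exactly the Fubini--Tonelli interchange invoked in the proof of Theorem~\ref{thm:L1case}. Writing both sides as double integrals,
\[
\langle T_af,\varphi_n\rangle=\int_{-\infty}^\infty\Big(\int_{-\infty}^\infty f(x)\,\sign(x-ay)\,e^{-\frac{(x-y)^2}{2}}\,\rmd x\Big)\overline{\varphi_n(y)}\,\rmd y ,
\]
and, since the kernel in \eqref{eq:struction_of_adjoint} is real and $(x-y)^2=(y-x)^2$, $\langle f,T_a^*\varphi_n\rangle$ is the same double integral with the order of integration reversed. So it suffices to prove the absolute convergence
\[
\int_{-\infty}^\infty\int_{-\infty}^\infty |f(x)|\,e^{-\frac{(x-y)^2}{2}}\,|\varphi_n(y)|\,\rmd x\,\rmd y<\infty .
\]
The plan is to carry out the inner ($x$) integral first and bound it uniformly in $y$ by a Gaussian, then pay for the remaining $y$-integral with the integrability of $\phi_0\varphi_n$.

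The naive bound $\int f(x)e^{-(x-y)^2/2}\rmd x\le\int f=1$ is not enough, since $\varphi_n$ is not in $L^1$. Instead, the same weight rearrangement that turns \eqref{eq:conv_1} into the convolution form behind the growth estimate gives, for some constant $\kappa_a>0$,
\[
e^{\frac{(1-a)y^2}{2}}\int_{-\infty}^\infty f(x)\,e^{-\frac{(x-y)^2}{2}}\,\rmd x=\kappa_a\int_{-\infty}^\infty \widetilde f(ax)\,e^{-\frac{a(x-y)^2}{2}}\,\rmd x ,
\]
with $\widetilde f(x)=e^{\frac{(1-a)x^2}{2a}}f(x)$ as in the proof of Theorem~\ref{thm:L1case}. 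Splitting the right-hand side over the unit intervals centered at $y+k$ for $k\in\bbZ$, bounding $e^{-a(x-y)^2/2}$ by $e^{-a(|k|-1/2)^2/2}$ on each such interval, and invoking the growth estimate \eqref{eq:growth_estimate} (which is uniform in the center of the interval) collapses the sum to a finite constant. Hence $\int f(x)e^{-(x-y)^2/2}\rmd x\le C_4\,\phi_0(y)$ for all $y$, where $\phi_0(y)=e^{-\frac{(1-a)y^2}{2}}$ and $C_4<\infty$, so the double integral above is at most $C_4\|\phi_0\varphi_n\|_1$.

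It remains to check $\phi_0\varphi_n\in L^1(\bbR)$. Proposition~\ref{prop:Hermite_output} already yields $\phi_0\varphi_n\in L^2$ together with $\mathcal F[\phi_0\varphi_n]\in L^1$; the refinement I would record is that $\mathcal F[\phi_0\varphi_n]$ is in fact a Schwartz function. This is read off its explicit form \eqref{eq:define_varphi}: the numerator is a constant multiple of $\omega^n e^{-\beta\omega^2}$, while the denominator $c(\omega)$ (the Dawson-type function) is real-analytic, vanishes only at $\omega=0$ and only to first order, and by the bounds \eqref{eq:Bounds_On_D_function} behaves like a constant times $\omega^{-1}$ at infinity. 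Consequently, for $n\ge1$ the quotient is smooth on all of $\bbR$ and, together with all its derivatives, decays like a polynomial times $e^{-\beta\omega^2}$; thus $\mathcal F[\phi_0\varphi_n]$ is Schwartz, hence so is $\phi_0\varphi_n$, and in particular $\|\phi_0\varphi_n\|_1<\infty$. Combined with the previous paragraph this establishes absolute convergence of the double integral, so Fubini's theorem applies and $\langle T_af,\varphi_n\rangle=\langle f,T_a^*\varphi_n\rangle$.

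I expect the only genuine obstacle to be the uniform-in-$y$ control of the smoothed density $\int f(x)e^{-(x-y)^2/2}\rmd x$: because $f$ is assumed only to be a bounded probability density, with no a priori tail decay, one cannot bound $\int\widetilde f$ directly---the growth estimate \eqref{eq:growth_estimate} controls $\widetilde f$ only over bounded intervals---so one must route the estimate through the weight-rearrangement identity, where the Gaussian smoothing itself supplies the missing decay. The upgrade of $\mathcal F[\phi_0\varphi_n]$ from $L^1$ to Schwartz class, and the remaining bookkeeping, are routine.
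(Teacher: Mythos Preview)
Your argument is correct and matches the paper's proof: both rewrite the integrand via the weight rearrangement into the form $\widetilde f\cdot(\text{Gaussian kernel})\cdot\phi_0\varphi_n$, split the $x$-integration over unit intervals, invoke the growth estimate \eqref{eq:growth_estimate} on each piece, and sum the Gaussian tails to reduce everything to $\|\phi_0\varphi_n\|_1<\infty$. Your Schwartz-class justification for $\phi_0\varphi_n\in L^1$ is an additional detail the paper leaves implicit, but it is correct and a welcome clarification.
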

\begin{proof}
We use Fubini's theorem to switch the order of integration,
\begin{align}
\int_{-\infty}^\infty T_af(y)\varphi_n(y)dy & = \int_{-\infty}^\infty \int_{-\infty}^\infty f(x)\mathsf{sign}(x-ay)\exp\left(-\frac{(x-y)^2}{2}\right)\varphi_n(y) \rmd x \rmd y \\
& =  \int_{-\infty}^\infty f(x)\left[ \int_{-\infty}^\infty \mathsf{sign}(x-ay)\exp\left(-\frac{(x-y)^2}{2}\right)\varphi_n(y) \rmd y\right] \rmd x \\
& = \int_{-\infty}^\infty f(x) T_a^*\varphi_n(x) \rmd x \; .
\end{align}
Absolute integrability is verified as follows:
\begin{align}
\int_{-\infty}^\infty \int_{-\infty}^\infty & \bigg| \widetilde{f}(x)\mathsf{sign}(x-ay)\exp\left(-\frac{\left(x - ay\right)^2}{2a}\right)\varphi_n(y)\exp\left(-\frac{(1-a)y^2}{2}\right)\bigg| \rmd x \rmd y  \\
& = \int_{-\infty}^\infty |\varphi_n(y)|\exp\left(-\frac{(1-a)y^2}{2}\right)\sum_{z\in\mathbb{Z}}\int_z^{z+1} \widetilde{f}(x)\exp\left(-\frac{\left(x - ay\right)^2}{2a}\right)\rmd x \rmd y \\
& \leq \int_{-\infty}^\infty |\varphi_n(y)|\exp\left(-\frac{(1-a)y^2}{2}\right)\sum_{z\in\mathbb{Z}}C_4\sup_{x\in[z,z+1)}\exp\left(-\frac{\left(x - ay\right)^2}{2a}\right) \rmd y \label{eq:proofstep2}\\
& \leq C_5  \int_{-\infty}^\infty |\varphi_n(y)|\exp\left(-\frac{(1-a)y^2}{2}\right) \rmd y < \infty \; .
\end{align}
In \eqref{eq:proofstep2} we have used the growth estimate from \eqref{eq:growth_estimate}.
\end{proof}
\end{appendices}

\end{document}